\newcommand{\OMIT}[1]{}
\newcommand{\ACT}{\ensuremath{\text{\sf ACT}}}
\newcommand{\defn}[1]{\textit{#1}} %for defining a concept
\newcommand{\mcl}[1]{\ensuremath{\mathcal{#1}}}
\newcommand{\set}[1]{\{#1 \}}
\newcommand{\sem}[1]{[\![#1 ]\!]}
\newcommand{\struct}{\mathfrak{S}}
\newcommand{\Tree}{\ensuremath{\text{\scshape Tree}}}
\newcommand{\transys}{\ensuremath{\langle S; \to\rangle}}
\newcommand{\transysMDP}{\ensuremath{\langle S = V_1 \cup V_2; {\to_1}, {\to_2} \rangle}}
\newcommand{\FMDP}{\ensuremath{\langle S = V_1 \cup V_2; {\to_1}, {\to_2}, \SF, \WF \rangle}}
\newcommand{\transysG}{\ensuremath{\langle S; \{\to_a\}_{a \in \ACT}\rangle}}
\newcommand{\Kripke}{\ensuremath{\langle S; {\{\to_a\}_{a \in \ACT}},\allowbreak \labeling\rangle}}
\newcommand{\transysDom}{S}
\newcommand{\labeling}[0]{\ell}
\newcommand{\AP}{\ensuremath{\text{\sf AP}}}
\newcommand{\DTMCAP}{\ensuremath{\langle S; \delta, \labeling\rangle}}
\newcommand{\transysAP}{\ensuremath{\langle S; \to,\allowbreak \labeling\rangle}}
\newcommand{\Rec}{\ensuremath{\text{\scshape Rec}}}
\newcommand{\empseq}{\ensuremath{\epsilon}}
\newcommand{\ialphabet}{\ensuremath{\Sigma}}
\newcommand{\Var}{\ensuremath{\mcl{V}}}
\newcommand{\N}{\ensuremath{\mathbb{N}}}
\newcommand{\Z}{\ensuremath{\mathbb{Z}}}
\newcommand\xor{\oplus}
\newtheorem*{convention}{Convention}
\newcommand{\problemx}[3]{
\par\noindent\underline{\sc#1}\par\nobreak\vskip.2\baselineskip
\begingroup\clubpenalty10000\widowpenalty10000
\setbox0\hbox{\bf Instance: }\setbox1\hbox{\bf Question: }
\dimen0=\wd0\ifnum\wd1>\dimen0\dimen0=\wd1\fi
\vskip-\parskip\noindent
\hbox to\dimen0{\box0\hfil}\hangindent\dimen0\hangafter1\ignorespaces#2\par
\vskip-\parskip\noindent
\hbox to\dimen0{\box1\hfil}\hangindent\dimen0\hangafter1\ignorespaces#3\par
\endgroup}
\newcommand{\Path}{\pi}
\newcommand{\Aut}{\ensuremath{\mathcal{A}}} %Automata
\newcommand{\Tra}{\ensuremath{\mathcal{T}}} %Transducer
\newcommand{\AutB}{\ensuremath{\mathcal{B}}} %Automata
\newcommand{\transrel}{\ensuremath{\Delta}} %Automata Transition Relation
\newcommand{\tran}[1]{\ensuremath{\stackrel{#1}{\longrightarrow}}}
\newcommand{\controls}{\ensuremath{Q}} %Automata Transition Relation
\newcommand{\finals}{\ensuremath{F}} % Set of final states
\newcommand{\Lang}{\ensuremath{\mathcal{L}}} %Language
\newcommand{\AutRun}{\ensuremath{\rho}} % Automata run
\newcommand{\arity}{\ensuremath{\text{\texttt{ar}}}}
\newcommand{\rarw}{\rightarrow}
\newcommand{\ModelRun}{\ensuremath{\pi}}
\newcommand{\run}{\mathit{Run}}
\newcommand{\Size}[1]{\ensuremath{\|#1\|}}
\newcommand{\Length}[1]{\ensuremath{|#1|}}
\newcommand{\DEC}{\ensuremath{\text{\texttt{DEC}}}}
\newcommand{\ID}{\ensuremath{\text{\texttt{ID}}}}
\newcommand{\RES}{\ensuremath{\text{\texttt{RESET}}}}
\newcommand{\Prob}{\ensuremath{\text{Prob}}}
\newcommand{\funcClass}{\ensuremath{\mathfrak{F}}}
\newcommand{\SFtype}{\ensuremath{\text{\texttt{c}}}}
\newcommand{\WFtype}{\ensuremath{\text{\texttt{j}}}}
\newcommand{\CTR}{\mathcal{X}}
\newcommand{\proj}[1]{\ensuremath{\mathit{proj}_{#1}}}
\newcommand{\biword}[2]{(#1,#2)}
\newcommand{\Implies}{\Rightarrow}
\newcommand{\SF}{\mathfrak{C}} % compassion
\newcommand{\WF}{\mathfrak{J}} % justice
\newcommand{\blinded}[1]{\textcolor{black!60}{[omitted for submission]}}
\newcommand{\toolname}[0]{\textsc{FairyTail}}
\newcommand{\slrp}[0]{\textsc{Slrp}}
\newcommand{\td}[1]{\textcolor{blue}{\ifmmode \text{[#1]}\else [#1] \fi}}
\newenvironment{to-do}
{ \rule{1ex}{1ex}\hspace{\stretch{1}} \bfseries}
{ \hspace{\stretch{1}}\rule{1ex}{1ex} \vspace{1ex}}
\newif\ifdraft\drafttrue
\newcommand\anthony[1]{{\color{blue}
[#1 - \textbf{Anthony}]}}
\newcommand\rupak[1]{{\color{green}
[#1 - \textbf{RM}]}}
\newcommand\philipp[1]{{\color{orange}
[#1 - \textbf{PR}]}}
\newcommand\todo[1]{}
\newcommand\anthony[1]{}
\newcommand\rupak[1]{}
\newcommand\philipp[1]{}
\title{Fair Termination for Parameterized Probabilistic Concurrent Systems
(Technical Report)}
\author{
Ond\v{r}ej Leng\'{a}l\inst{1} \and
Anthony W. Lin\inst{2} \and
Rupak Majumdar\inst{3} \and
Philipp R\"ummer\inst{4}}
\institute{
  {FIT, Brno University of Technology, Czech Republic}
\and
  {Department of Computer Science, University of Oxford, UK}
\and
  {MPI-SWS Kaiserslautern, Germany}
\and
  {Uppsala University, Sweden}
}
\authorrunning{O. Lengál, A. W. Lin, R. Majumdar, P. R\"ummer} %mandatory. First: Use abbreviated first/middle names. Second (only in severe cases): Use first author plus 'et. al.'
\begin{document}

% \sloppy
%
% \special{papersize=8.5in,11in}
% \setlength{\pdfpageheight}{\paperheight}
% \setlength{\pdfpagewidth}{\paperwidth}

% \conferenceinfo{CONF 'yy}{Month d--d, 20yy, City, ST, Country}
% \copyrightyear{20yy}
% \copyrightdata{978-1-nnnn-nnnn-n/yy/mm}
% \copyrightdoi{nnnnnnn.nnnnnnn}

\maketitle

\vspace{-4mm}
\begin{abstract}
  We consider the problem of automatically verifying that a parameterized family 
of probabilistic concurrent systems terminates with probability one for 
all instances against adversarial schedulers.
A parameterized family defines an infinite-state system: for each number 
$n$, the family consists of an instance with $n$ finite-state processes.
In contrast to safety, the parameterized verification of
liveness is currently still considered extremely challenging especially in the 
presence of probabilities in the model.
One major challenge is to provide a sufficiently powerful symbolic framework.
One well-known symbolic framework for the parameterized verification of 
non-probabilistic concurrent systems is \emph{regular model checking}.
% in which regular languages and regular relations are used to symbolically represent, 
% respectively, infinite sets of configurations and relations over these configurations.
Although the framework was recently extended to probabilistic systems,
%termination typically fails over concurrent systems without incorporating 
%certain fairness requirements, which i
incorporating fairness in the framework --- often crucial for verifying 
termination --- has been especially difficult
due to the presence of an infinite number of fairness constraints (one for each process).
\OMIT{
In this paper, we investigate how to embed \emph{finitary fairness} 
(a
sufficiently powerful fairness condition
to this framework. 
}
\OMIT{
Since termination typically fails over concurrent systems without imposing
certain fairness conditions on the schedulers, we propose to tackle the problem
of embedding \emph{finitary fairness} in the regular model checking framework 
for parameterized probabilistic systems.
}
Our main contribution is a systematic, regularity-preserving, encoding of 
\emph{finitary fairness} (a realistic notion of fairness proposed by Alur \& 
Henzinger) in the framework of regular model checking for 
probabilistic
parameterized systems. Our encoding reduces termination with finitary fairness 
to verifying parameterized termination \emph{without fairness} 
over probabilistic systems in regular model checking (for which
a verification framework already exists).
\OMIT{
Combined with recent advances in termination checking, we 
obtain a fully automatic checker for parameterized probabilistic systems
with respect to process-fair termination properties.
}
We show that our algorithm %is sufficiently powerful for verifying
could verify termination for many interesting examples
from distributed algorithms (Herman's protocol) and evolutionary biology 
(Moran process, cell cycle switch), which do not hold under the standard notion
of fairness.  To the best of our knowledge, our algorithm is the first
fully-automatic method that can prove termination for these examples.

\end{abstract}

\pagestyle{plain}
% \category{CR-number}{subcategory}{third-level}
% 
% \keywords
% TODO

%%%%%%%%%%%%%%%%%%%%%%%%%%%%%%%%%%%%%%%%%%%%%%%%%%%%%%%%%%%%%%%%%%%%%%%%%%%%%%%%
\vspace{-8.0mm}
\section{Introduction} \label{sec:intro}
\vspace{-1.0mm}
%%%%%%%%%%%%%%%%%%%%%%%%%%%%%%%%%%%%%%%%%%%%%%%%%%%%%%%%%%%%%%%%%%%%%%%%%%%%%%%%

In parameterized probabilistic concurrent systems, 
a population of {\em agents}, each typically modeled as a finite-state probabilistic
program, run concurrently in discrete time and update their states based on 
probabilistic transition rules.
The interaction is governed by an underlying {\em topology}, which determines which agents
can interact in one step, and 
a~{\em scheduler}, which picks the specific agents involved in the interaction.
Concurrent probabilistic systems arise as models of distributed algorithms 
\cite{LSS94,Her90,IJ90,LR81,Fokkink-book},
where each agent is a processor, the interaction between processors is determined
by a communication topology, and the processor can update its internal state based
on the communication as well as randomization.
In each step, the scheduler adversarially chooses a~processor to run.
Concurrent probabilistic populations also arise in agent-based population models
in biology 
\cite{Lieberman05}, wherein an agent can represent an allele, a cell, or a~species, 
and the interaction between agents describes how these entities evolve over 
time.
\OMIT{
For such systems, typical properties of interests include \emph{termination},
i.e., whether the system is guaranteed to reach certain states with probability
1 (also called \emph{almost surely terminate}) regardless of the behaviour of 
the schedulers.
}
For a population of a~fixed size, there is a rich theory of
probabilistic verification \cite{CY92,BK08,PRISM,Vardi85} based on
finite-state Markov decision processes (MDPs).  Verification
questions for population models, however, ask if a property holds for
populations of \emph{all} sizes: even if each agent is finite-state, the
family of all processes (for each population size) is an
infinite-state MDP.  
Indeed, for many simple population models, one
can show that the verification question is undecidable, even for
reachability or safety properties in the non-probabilistic 
setting~\cite{AK86,BertrandF13,Javier16}.
Consequently, the verification question for populations requires
techniques beyond finite-state probabilistic verification, and
requires symbolic techniques to represent potentially infinite sets of
states.

One well-known symbolic framework for verifying parameterized
non-probabilistic
concurrent systems is \emph{regular model checking} 
\cite{Parosh12,rmc-thesis,LTL-MSO,BLW03,Neider13,TL10}, where states of a population are modeled using 
words over
a suitable alphabet, sets of states are represented as regular
languages, and the transition relation is defined as a regular
transducer.  From parameterized verification of non-probabilistic
processes, it is known that regular languages provide a robust symbolic
representation of infinite sets, and automata-theoretic algorithms
provide the basis of checking safety or termination properties.

In this paper, we consider the problem of verifying that a given parameterized 
family of probabilistic concurrent systems 
\emph{almost surely terminates}, i.e., reaches certain final states with 
probability 1 from each initial state regardless of the behaviour of the 
schedulers. 
Termination is a fundamental property when verifying parameterized 
probabilistic systems.
Since termination typically, however, fails without imposing certain \emph{fairness} conditions on the 
scheduler, it is crucial to be able to incorporate fairness assumptions into
a termination analysis. 
Therefore, although the framework of regular
model checking has recently been extended for proving termination (without
fairness) over
parameterized probabilistic concurrent systems \cite{LR16}, it still cannot be 
used to prove termination for many interesting parameterized probabilistic
concurrent systems. 
%
% e.g., randomized dining philosopher 
% protocols \cite{LR81}, randomized self-stabilizing protocols (e.g. see
% \cite{IJ90,Her90}), and agent-based population models in biology (e.g. genetic 
% drift and Moran's process \cite{Lieberman05}), all require fairness. 

%Our first contribution is an analysis of the following fundamental question:
%\begin{itemize}[noitemsep,topsep=1pt]
%\item 
\emph{What notion of fairness should we consider for proving termination 
for parameterized probabilistic concurrent systems?} 
To answer this question, one would naturally start by looking at standard
notions of fairness in
probabilistic model checking \cite{BK08}, which asserts that every process must
be chosen infinitely often. 
However, this notion seems to be too weak to prove
termination for many of our examples, notably Herman's self-stabilizing
protocol \cite{Her90} in an asynchronous setting, 
and population models from biology (e.g. Moran's process \cite{Lieberman05}). 
The standard notion of fairness gives rise to a rather unintuitive and
unrealistic strategy for the scheduler, which could 
delay an enabled process for as long as it desires while still being fair (see 
\cite[Example 8]{BKL14}
and the Herman's protocol example in Section \ref{sec:semantics}).
For this reason, we propose to consider Alur \& Henzinger's \cite{AH98}
\defn{finitary fairness} --- a stronger notion of fairness that allows 
the scheduler to delaying executing an enabled process in an infinite run for
at most $k$ steps, for some unknown but fixed bound $k \in \N$. 
Alur \& Henzinger argued that
this fairness notion is more realistic in practice, but it is not as restrictive
as 
the notion of \defn{$k$-fairness}, which fixes the bound $k$ a priori. 
In addition, it should be noted that finitary fairness is strictly weaker than 
probabilistic fairness (scheduler chooses processes randomly) 
for almost-sure termination over finite MDPs 
and parameterized probabilistic systems (an infinite family of finite MDPs).
We will show in this paper that there are many interesting examples of
parameterized probabilistic concurrent systems for which termination is
satisfied under finitary fairness, but \emph{not} under the most general notion
of fairness.
%We will show that this 
%is an integer bound that depends \emph{only} on the current
%execution. 
%when we look at our examples, they are unnecessarily weak 

%
%Also, finitary fairness reduction has never been done before in regular model
%checking for non-probabilistic protocols. Actually, the standard technique
%doesn't work since you need a counter to keep track of the bound. The token
%passing protocol doesn't work in this case. I'm not sure if there's anything
%else that we need to say (e.g. those in Skype conversation)?
%
%One more note: if you go through the paper, ignore everything after Section B in
%Appendix. There are some proofs that I still need to transfer to latex, but it
%has become so much easier with our assumption.}
%
%\anthony{Should say that, to the best of our knowledge, counter encoding of 
%this type transformation has never been done to encode fairness in the context
%of regular model checking. We are the first to do it. RMC's encoding of fairness
%usually goes via automata (e.g. Buchi), which doesn't work for probabilistic 
%and is also expensive. Also you get some nasty conditions to verify (e.g. target
%sets which might not be regular for probabilistic case).}

\OMIT{
We focus on \emph{liveness} properties, formulated as the \emph{fair
  termination} problem.  The qualitative fair termination question for
concurrent probabilistic populations asks if for every population
size, the system terminates (i.e., reaches a final state) with
probability one under all adversarial, but \emph{fair}
schedulers.\footnote{We focus on \emph{qualitative} verification,
  which asks if a property holds with probability one.  In contrast,
  \emph{quantitative} verification asks for the precise probability of
  satisfaction of a property.}
% For example, we would like to prove that a distributed dining philosophers protocol
% ensures starvation freedom and mutual exclusion almost surely no matter how many philosophers participate, 
% or that a mutant gene can take over a population almost surely no matter how large the original population.
%
% liveness, fairness
Fairness constraints on the scheduler  rule out certain
uninteresting traces.
We consider two special classes of fairness constraints: (i)
\emph{process-fair} schedulers, which ensure that every process is
scheduled infinitely often by the scheduler~\cite{LynchBook}; and (ii)
\emph{finitary process-fair schedulers,} which provide the stronger
guarantee that every process is enabled at least every~$k$ steps, for
some arbitrary but fixed $k$ (which can be different for every
instance of a parameterized system).
}

% introduce parameterized verification
\smallskip
\noindent
\textbf{Contributions.}
Our main contribution is a systematic, regularity-preserving,
encoding of finitary fairness in the framework of regular model checking for
parameterized probabilistic concurrent systems. More precisely,
our encoding reduces the problem of verifying almost sure termination under 
finitary fairness to almost sure termination \emph{without fairness} in regular 
model checking, for which a verification framework exists \cite{LR16}.

In general, the difficulty with finding an encoding of fairness is how to deal 
with an infinite number of fairness requirements (one for each process) in a 
systematic and regularity-preserving manner. There are known encodings of
general notions of fairness in regular model checking, e.g., by using a token
that is passed to the next process (with respect to some ordering of the
processes) when the current process is executed,
and ensuring that the first process holds the token and passes it to the right
infinitely many times
%around the processes
(e.g. see \cite{LTL-MSO,rmc-thesis}). However, these encodings do not work in 
in our case for
several reasons. Firstly, they do not take into account the unknown upper bound 
(from finitary fairness) within which time a process has to be executed. 
Adapting
these encodings to finitary fairness would require \defn{the use of unbounded
counters}, which do not preserve regularity.  Secondly,
such encodings would yield the problem of verifying an almost-sure Rabin 
property (of the form $\Box\Diamond A \wedge \Diamond B$ in LTL notation,
where $A$ and $B$ are regular sets). Although we could reduce this to an
almost-sure termination property by means of product automata construction (i.e.
by first converting the formula to deterministic Rabin automaton), the target 
set $B$ in the resulting termination property $\Diamond B$ (consisting of 
configurations in strongly connected components satisfying some properties)
is \emph{not} necessarily regular. 

Instead, we revisit the well-known \emph{abstract program
transformation} in the setting of non-probabilistic concurrent systems 
\cite{Francez} encoding fairness into the program by associating to each 
process an unbounded counter that acts
as an ``alarm clock'', which will ``set off'' if an enabled process has not been
chosen by the scheduler for ``too long.'' This abstract program
transformation has been adapted by Alur \& Henzinger \cite{AH98} in the case of
finitary fairness by additionally incorporating an extra counter $n$ that 
stores the unknown upper bound and resetting the value of a counter belonging to
a chosen process to the ``default value'' $n$. Our contributions are as follows:
\begin{enumerate}
\item We show how Alur \& Henzinger's program transformation could be adapted to
the setting of probabilistic parameterized concurrent systems (infinite family 
of finite MDPs). This involves constructing a new parameterization of the system
(using the idea of weakly finite systems) and a proof that the transformation
preserves reachability probabilities.
\item We show how the resulting abstract program transformation could be made
    concrete in the setting of regular model checking \emph{without using 
    automata models beyond finite automata}.
\item We have implemented this transformation in \toolname{}. Combined with the
    existing algorithm \cite{LR16} for verifying almost sure termination 
        (without fairness) in regular model checking, we have successfully
        verified a number of models obtained from distributed algorithms and
        biological systems including Herman's protocol~\cite{Her90}, Moran
        processes in a linear array~\cite{Moran58,Lieberman05}, and 
        the cell cycle switch
        model~\cite{cell-cycle-switch} on ring and line topologies.
        To the best of our knowledge, our algorithm is the first 
        fully-automatic method that can prove termination for these examples. 
\end{enumerate}
\OMIT{
In this paper, we show how this
abstract program transformation could be adapted in the setting of 

Second, we define an encoding of finitary fairness constraints using {\em bounded
but unknown} counters: in this encoding, there exists a finite bound
on values of the counters such that, in the counter-based encoding of
fairness, no counter ever goes beyond the bound. We prove that bounded
counters can be used to reduce \emph{finitary process-fair}
termination to termination \emph{under all schedulers:} if a system
terminates with probability one for all finitary process-fair
schedulers, then in the system with counters, with probability one,
either the system terminates or reaches a state where a counter is
$0$, against \emph{all} schedulers. The corresponding result for
(infinitary) process-fair schedulers does not hold; using a
representative set of examples from the literature (including the
asynchronous Herman protocol), we argue that finitary process fairness
is indeed a more appropriate notion than infinitary process fairness.
}

%for parameterized populations.
\OMIT{
This constitutes several observations.
First, the infinite MDP for a population has the \emph{weak finiteness} property:
for any fixed initial state of the system, there is only a finite number of states
reachable with positive probability, because for a fixed population size, the MDP is finite-state.
The actual probability values are immaterial for qualitative temporal verification
as long as we distinguish transitions of probability $0$ and transitions of positive probability
\cite{HSP83,PnueliZ86}.
Thus, the fair termination problem for probabilistic populations can be reduced to a two-person
\emph{non-probabilistic} game between the system and the scheduler, where the system
must ensure reaching the final states against all fair strategies of the scheduler.
While there can be an unbounded number of fairness constraints,
zero, one, or more for each agent in the population,
for every fixed initial state, only a finite number of these constraints are relevant.
}

%
% For concurrent programs, the method of helpful directions \cite{OlderogA88,Francez} is a way to reduce from 
% termination with general fairness constraints to termination.
% Intuitively, the method introduces a set of additional counters to a program, which track that the scheduler
% ensures all fairness constraints.
% The counters are decremented on each step by the scheduler and reset every time the scheduler makes progress on a fairness
% obligation.
% The system player wins either if the game reaches the terminal states or if one of the counters reaches $0$ (showing that
% the scheduler has failed to meet one of its fairness obligations).
% We show that a similar transformation can be used to encode arbitrary fairness constraints in probabilistic models,
% but unfortunately, the technique requires unboundedly large counters even if the original system is finite-state.\philipp{this claim has to be removed as well?}
% 
\OMIT{
Second, we define an encoding of finitary fairness constraints using {\em bounded
but unknown} counters: in this encoding, there exists a finite bound
on values of the counters such that, in the counter-based encoding of
fairness, no counter ever goes beyond the bound. We prove that bounded
counters can be used to reduce \emph{finitary process-fair}
termination to termination \emph{under all schedulers:} if a system
terminates with probability one for all finitary process-fair
schedulers, then in the system with counters, with probability one,
either the system terminates or reaches a state where a counter is
$0$, against \emph{all} schedulers. The corresponding result for
(infinitary) process-fair schedulers does not hold; using a
representative set of examples from the literature (including the
asynchronous Herman protocol), we argue that finitary process fairness
is indeed a more appropriate notion than infinitary process fairness.

Third, and most importantly, the encoding with bounded counters can be performed in a
regularity-preserving manner.  Given a regular transducer on words
(representing the transition relation of the original system), we
effectively construct a new regular transducer such that the original
system fairly terminates with probability one iff the new system
terminates with probability one.  Our reduction enables us to encode
fairness in probabilistic regular model checking, which has
been a long-standing open problem.
}

\OMIT{
As in regular model checking, our algorithm requires an approximate computation
of the transitive closure of a transducer.
% (The undecidability of the parameterized verification problem is hidden in the
% precise computation of the transitive closure.)
We implement a recent technique \cite{NT16,LR16} that uses L$^*$ learning \cite{Angluin}
and SAT solving to compute abstract winning strategies for the system player.
Overall, we provide the first fully automatic
automata-theoretic algorithm for fair termination for
automatic concurrent probabilistic populations ---this was a well-known open
problem in regular model checking. 
We have implemented our algorithm in a tool called \toolname{} that
transforms a~system with explicit fairness into a~system without fairness
conditions, where the fairness of the original has been encoded using counters.
We solve the resulting system using an off-the-shelf solver for parameterized
systems.
Using \toolname{}, we are able to verify fair termination of a number 
of models obtained from distributed algorithms and biological
systems.
}

% We show that the transitive closure terminates for many common
% protocols in practice.
% MORE on EXPERIMENTS

\smallskip
\noindent
\textbf{Related work.}
There are few techniques for automatic 
verification of liveness properties of parameterized probabilistic
programs.
Almost sure verification of probabilistic finite-state programs goes
back to Pnueli and co-workers \cite{HSP83,PnueliZ86}.
Esparza et al.~\cite{EGK12} generalize the reasoning to weakly finite programs,
and describe a~heuristic to guess a \emph{terminating pattern} 
by constructing a nondeterministic program from
a given probabilistic program and a terminating pattern candidate. 
This allows them to exploit model checkers and termination provers for nondeterministic programs. 
More recently, Lin and R\"ummer \cite{LR16} consider unconditional termination 
for parameterized probabilistic programs. 
While our work builds on these techniques,
our main contribution is the incorporation of fairness in regular model checking
of probabilistic programs, which was not considered before.

Fairness for concurrent probabilistic systems was considered by Vardi \cite{Vardi85}
and by Hart, Sharir, and Pnueli \cite{HSP83}, and generalized later 
\cite{PnueliZ86,deAlfaro97,BaierKwiatkowska98}.
The focus was, however, on a fixed number of processes. 
The notion of fairness through explicit scheduling was developed by Olderog and Apt
\cite{OlderogA88}. More recently, notions of fairness for infinitary control
(i.e., where an infinite number of processes can be created) was considered by
Hoenicke, Olderog, and Podelski \cite{OlderogP10,HoenickeOP10}.
% Surprisingly, we could not find the explicit scheduling approach for probabilistic
% systems in the literature. 

%% NOTE: we don't reduce to NT16 anymore
%%
\OMIT{

We reduce fair termination to checking realizability of a fair scheduler and to checking
unconditional termination.
For the former, there is a recent algorithm for safety games over rational graphs (an extension
of automatic graphs) proposed by Neider and Topcu \cite{NT16}. 
For the latter, there is a recent incremental regular model checking
algorithm for reachability games proposed by Lin and R\"ummer \cite{LR16}.
% The theory and practice of solving 2-player games on graphs, of course, is a much-studied
% topic in formal methods (see, e.g., \cite{Zielonka}).
}

Martingale techniques have been used to prove termination of sequential,
infinite-state, probabilistic programs \cite{Sriram13,Mon01,FioritiH15,Chakarov,KaminskiKMO16}.
These results are not comparable to our results, as they do not consider
unbounded families of fairness constraints nor communication topologies.
% Further, most techniques assume manual specification of the ranking supermartingale,
% though in principle, the search for supermartingales could be reduced to constraint
% solving, given a fixed template. 

%
% PUT BACK IN FULL VERSION
% Unfortunately, this idea does not work for probabilistic systems.
% First, it is not obvious how the reduction handles strong fairness.
% Second, more seriously, for probabilistic systems, the fair termination problem reduces to a 2-player
% game and one has to find a winning strategy rather than a single path.
% Thus, to verify fair termination, the analogous reduction would ask whether a weakly finite
% game with $\omega$-regular winning conditions has a winning strategy.
% Instead, our encoding of fairness reduces the problem to \emph{unconditional} termination,
% for which regular proof rules already exist.
% It will be an interesting extension to our work to provide proof rules for
% general $\omega$-regular objectives.

% \input{example} % for tacas, we'll interleave the example
%%%%%%%%%%%%%%%%%%%%%%%%%%%%%%%%%%%%%%%%%%%%%%%%%%%%%%%%%%%%%%%%%%%%%%%%%%%%%%%%
\vspace{-2.0mm}
\section{Preliminaries} \label{sec:prelim}
\vspace{-1.0mm}
%%%%%%%%%%%%%%%%%%%%%%%%%%%%%%%%%%%%%%%%%%%%%%%%%%%%%%%%%%%%%%%%%%%%%%%%%%%%%%%%

\noindent
\textbf{General notations}:
For any two given real numbers $i \leq j$, we use a standard notation
(with an extra subscript) to denote real intervals, e.g., $[i,j]_{\mathbb{R}} = 
\{ k \in \mathbb{R} : i \leq k \leq j \}$ and $(i,j]_{\mathbb{R}} = \{ k \in \mathbb{R} : i < 
k \leq j \}$. We will denote intervals over integers by removing the
subscript, i.e., $[i,j] = [i,j]_{\mathbb{R}} \cap \Z$.
%Define $[k] = [0,k]$.
%a given natural number $k$, let $[k]$ denote the set $\{0,1,
%\ldots,k\}$.
Given a set $S$, we use %$\overline S$ to denote the complement of $S$ (when the
%universe of discourse is understood) and 
$S^*$ to denote the set of all finite
sequences of elements from $S$. The set $S^*$ always includes the empty
sequence, which we denote by $\empseq$. 
We use $S^+$ to denote the set $S^* \setminus \{\empseq\}$.
%We use standard notations from automata
%theory when denoting sequences of elements, e.g., when $S = \{a,b\}$, the 
%sequence $a,b,a,a,b,a$ is denoted by $abaaba$. 
Given two sets of words
$S_1, S_2$, we use $S_1\cdot S_2$ to denote the set $\{ v\cdot w: v\in S_1,
w\in S_2\}$ of words formed by concatenating words from $S_1$ with words from
$S_2$. Given two relations $R_1,R_2 \subseteq S \times S$, we define
their composition as $R_1 \circ R_2 = \{ (s_1,s_3) :  \exists s_2 ((s_1,s_2)
\in R_1 \wedge (s_2,s_3) \in R_2)\}$. 
\smallskip

\noindent
\textbf{Transition systems}:
We fix the (countably infinite) set $\AP$ of \defn{atomic propositions}. 
Let $\ACT$ be a~finite set of \defn{action symbols}.
A~\defn{transition system} over $\ACT$ is
a tuple $\struct = \Kripke$,
where $S$ is a set of \defn{configurations}, $\to_a\ \subseteq S \times S$ 
is a binary relation over $S$,
% and $U_b \subseteq S$ is a unary relation on $S$. 
and $\labeling : \AP \to 2^{S}$ maps atomic propositions to sets of configurations
(we omit $\labeling$ if it is not important).
%Here, we allow $S$ can be infinite, but we make the standard assumption
%that each $\to_a$ is \defn{image-finite}, i.e., $\{ v_2 : \exists v_1 \in S(
%v_1 \to_a v_2)\}$ is finite.
%If $S$ is infinite (resp. finite), the 
%transition system $\struct$ is said to be \defn{infinite} (resp. \defn{finite}).
%The set $\ACT$ contains \defn{action symbols}, while the set $\Props$ contains
%\defn{atomic propositions}. 
We use $\to$ to denote the relation $\left(\bigcup_{a \in \ACT} \to_a\right)$. 
%In the sequel, we will often only consider the case when $|\ACT| = 1$ for 
%simplicity.
The notation $\to^+$ (resp. $\to^*$) is used to denote the transitive (resp.
transitive-reflexive) closure of $\to$. 
We say that a sequence $s_1 \to \cdots \to s_n$ is a \defn{path} (or
\defn{run}) in $\struct$ (or in $\to$). Given two paths $\ModelRun_1: s_1
\to^* s_2$ and $\ModelRun_2: s_2 \to^* s_3$ in $\to$, we may concatenate them
to obtain $\ModelRun_1 \odot \ModelRun_2$ (by gluing together~$s_2$).
We call $\ModelRun_1$ a \defn{prefix} of $\ModelRun_1 \odot \ModelRun_2$.
For each $S' \subseteq S$, we use the notations $\mathit{pre}_{\to}(S')$ and 
$\mathit{post}_{\to}(S')$ to denote the pre/post image of $S'$ under $\to$.
That is, $\mathit{pre}_{\to}(S') = \{ p \in S : \exists q\in S'( p \to q ) \}$ and
$\mathit{post}_{\to}(S') = \{ q \in S : \exists p\in S'( p \to q ) \}$.

\OMIT{
When dealing with probabilistic systems, we will find the following notations
handy. For two sets $S, S' \subseteq S$ of configurations in $\struct$, denote
by $\Paths_\struct(S,S')$ the set of all paths from (some state in) $S$ to 
(some state in) $S'$. We will omit mention of $\struct$ if $\struct$ is
clear from the context.
\anthony{Need to check if we can remove this notation}
}
%the end state of $\ModelRun_1$ and the start state of
%$\ModelRun_2$.
%If $S' \subseteq S$, then
%$pre^*(S')$ (resp. $post^*(S')$) denotes the set of configurations $s$ that
%can reach (resp. be reached from) some state in $S'$.
\OMIT{
Given a relation $\to \subseteq S \times S$ and subsets 
$S_1,\ldots,S_n \subseteq S$, denote by 
$\Rec_{\to}(\{S_i\}_{i=1}^n)$ to be the set of elements $s_0 \in S$ for which
there exists an infinite path $s_0 \to s_1 \to \cdots$ visiting
each $S_i$ infinitely often, i.e., such that, for each
$i\in[1,n]$, there are infinitely many $j \in \N$ with $s_j \in S_i$.
}
\smallskip

%\noindent
%\textbf{Markov chains, Markov Decision Processes}:

%
%\smallskip

\noindent
\textbf{Words and automata}:
%\anthony{Move this to later maybe?}
We assume basic familiarity with finite word automata.
Fix a finite alphabet $\Sigma$. For each finite word $w = w_1\ldots w_n \in 
\Sigma^*$, we
write $w[i,j]$, where $1 \leq i \leq j \leq n$, to denote the segment
$w_i\ldots w_j$. Given an automaton $\mcl{A} = (\Sigma,Q,\delta,q_0,F)$,
a run of $\mcl{A}$ on $w$ is a function $\rho: \{0,\ldots,n\}
\rarw Q$ with $\rho(0) = q_0$ that obeys the transition relation $\delta$. 
We may also denote the run $\rho$ by the word $\rho(0)\cdots \rho(n)$ over
the alphabet $Q$. %In this case, the length $\|\rho\|$ of $\rho$ is $n$. 
The run $\rho$ is said to be \defn{accepting} if $\rho(n) \in F$, in which
case we say that~$w$ is \defn{accepted} by $\mcl{A}$. The language
$L(\mcl{A})$ of $\mcl{A}$ is the set of words in $\Sigma^*$ accepted by
$\mcl{A}$.

\OMIT{
\smallskip
\noindent
\textbf{Length-preserving automatic transition systems.} A transition
system $\struct =\transys$ is said to be \defn{length-preserving automatic
(LP-automatic)} if $S = \Sigma^*$ for some non-empty finite alphabet
$\Sigma$ and each relation $\to_a$ is given by a transducer $\mcl{A}_a$ over 
$\Sigma^*$. The set $\{\mcl{A}_a\}_{a \in \ACT}$ of transducers is said
to be a \defn{presentation} of $\struct$.
}

\OMIT{
Given a first-order (relational) formula $\varphi(\bar x)$ over signatures
$\{\to_a\}_{\ACT}$,
%containing only binary and unary relations (interpreted as transducers
%or automata over some alphabet $\Sigma$), 
we may define $\sem{\varphi}_{\struct}$ 
as the set of tuples of words $\bar w$ over $\Sigma^*$ such that
$\struct \models \varphi(\bar w)$ 
%\footnote{This means that replacing each 
%variable $x_1$ with the corresponding word $w_1$ makes $\varphi$ true.}
A useful fact about LP-automatic transition systems (in fact, extension to
automatic structures) is that $\sem{\varphi}_{\struct}$ is effectively
regular (see \cite{anthony-thesis} for a detailed proof and complexity 
analysis).
}
\OMIT{
\begin{proposition}
Given a first-order relation formula $\varphi(\bar x)$ over signatures with
only binary/unary relations (interpreted as transducers/automata over some
alphabet $\Sigma$), the relation $\sem{\varphi}$ is effectively regular.
\end{proposition}
}

\OMIT{
\noindent
\textbf{Trees, automata, and languages} A \defn{ranked alphabet} is
a nonempty finite set of symbols $\ialphabet$ equipped with an arity
function $\arity:\ialphabet \to \N$. 
A \defn{tree domain} $D$ is a nonempty finite subset of $\N^*$ satisfying
(1) \defn{prefix closure}, i.e., if $vi \in D$ with $v \in \N^*$ and $i \in
\N$, then $v \in D$, (2) \defn{younger-sibling closure}, i.e., if $vi \in
D$ with $v \in \N^*$ and $i \in \N$, then $vj \in D$ for each natural
number $j < i$. The elements of $D$ are called \defn{nodes}. Standard 
terminologies (e.g. parents, children, ancestors,
descendants) will be used when referring to elements of a tree domain. For 
example,
the children of a node $v \in D$ are all nodes in $D$ of the form $vi$ for
some $i \in \N$. A \defn{tree} over a ranked alphabet $\ialphabet$ is a 
pair $T = (D,\lambda)$, where $D$ is a tree domain and 
the \defn{node-labeling} $\lambda$ is a function mapping $D$ to $\ialphabet$
such that, for each node $v \in D$, the number of children of $v$ in $D$
equals the arity $\arity(\lambda(v))$ of the node label of $v$. We use
the notation $|T|$ to denote $|D|$. Write
$\Tree(\ialphabet)$ for the set of all trees over 
$\ialphabet$. We also use the standard term representations of 
trees (cf. \cite{TATA}).

A nondeterministic tree-automaton (NTA) over a ranked alphabet
$\ialphabet$ is a tuple $\Aut = \langle \controls,\transrel,\finals\rangle$,
where (i) $\controls$ is a finite nonempty set of states, (ii) $\transrel$ is a 
finite set of rules of the form $(q_1,\ldots,q_r) \tran{a} q$, where 
$a \in \ialphabet$, $r = \arity(a)$, and $q,q_1,\ldots,q_r \in Q$, and
(iii) $F \subseteq \controls$ is a set of final states. A rule
of the form $() \tran{a} q$ is also written as $\tran{a} q$.
A
\defn{run} of $\Aut$ on a tree $T = (D,\lambda)$ is a mapping $\AutRun$ from 
$D$ to $\controls$
such that, for each node $v \in D$ (with label $a = \lambda(v)$) with its all 
children $v_1,\ldots,v_r$, it is the case that 
$(\AutRun(v_1),\ldots,\AutRun(v_r)) \tran{a} \AutRun(v)$ is a~transition in
$\transrel$. For a subset $\controls' \subseteq \controls$, the run is said to 
be \defn{accepting at $\controls'$} if $\AutRun(\epsilon)
\in \controls'$. It is said to be \defn{accepting} if it is accepting at 
$\finals$. The NTA is said to \defn{accept} $T$ at $\controls'$ if it has an 
run on $T$ that is accepting at $\controls'$. Again, we will omit mention of
$\controls'$ if $\controls' = \finals$. The language $\Lang(\Aut)$ of $\Aut$ is 
precisely the set of
trees which are accepted by $\Aut$. A language $L$ is said to be \defn{regular}
if there exists an NTA accepting $L$. In the following, we use $\Size{\Aut}$
to denote the size of $\Aut$.
%In the sequel, for a given state
%$q \in \controls$, we will use the notation $\Aut^{q}$ to refer to the NTA
%$\Aut$ but will final state-set $\{q\}$, i.e., $\Aut^{q} = \langle 

A \defn{context} with \defn{(context) variables} $x_1,\ldots,x_n$ is a tree $T =
(D,\lambda)$ over the alphabet $\ialphabet \cup \{x_1,\ldots,x_n\}$, where
$\ialphabet \cap \{x_1,\ldots,x_n\} = \emptyset$ and
for each $i=1,\ldots,n$, it is the case that $\arity(x_i) = 0$ and 
there exists a unique \defn{context node} $u_i$ with $\lambda(u_i) = x_i$.
In what follows, we will sometimes denote such a context as $T[x_1,\ldots,x_n]$.
Intuitively, a context $T[x_1,\ldots,x_n]$ is a tree with $n$ ``holes'' that can
be filled in by trees in $\Tree(\ialphabet)$. More precisely, given trees
$T_1 = (D_1,\lambda_1),\ldots,T_n = (D_n,\lambda_n)$ over $\ialphabet$, we 
use the notation $T[T_1,\ldots,T_n]$ to denote the tree $(D',\lambda')$ obtained
by filling each hole $x_i$ by $T_i$, i.e., $D' = D \cup \bigcup_{i=1}^n 
u_i\cdot D_i$ and $\lambda'(u_iv) = \lambda_i(v)$ for each $i = 1,\ldots,n$
and $v \in D_i$. Given a tree $T$, if $T = C[t]$ for some context tree $C[x]$ 
and a tree $t$, then $t$ is called a \defn{subtree} of $T$. If $u$ is 
the context node of $C$, then we use the notation $T(u)$ to obtain
this subtree $t$.  Given an NTA $\Aut = \langle 
\controls,\transrel,\finals\rangle$ over $\ialphabet$ and states 
$\bar q = q_1,\ldots,q_n \in \controls$, we say that $T[x_1,\ldots,x_n]$ 
is accepted by $\Aut$ from $\bar q$ (written $T[q_1,\ldots,q_n] \in 
\Lang(\Aut)$) if it is \defn{accepted} by the NTA 
$\Aut' = \langle \controls,\transrel',\finals\rangle$ over $\ialphabet 
\cup \{x_1,\ldots,x_n\}$, where $\transrel'$ is the union of $\transrel$ and
the set containing each rule of the form $\tran{x_i} q_i$. 
}

\smallskip
\noindent
\textbf{Reachability games}:
We recall some basic concepts on 2-player reachability games 
(see e.g. \cite[Chapter 2]{ALG-book} on games with 1-accepting 
conditions).
An \defn{arena} is a~transition system $\struct = \transysMDP$, where $S$
(i.e. the set of ``game configurations'')
is partitioned into two disjoint sets $V_1$ and $V_2$ such that $pre_{\to_i}(S)
\subseteq V_i$ for each $i \in \{1,2\}$. The transition relation $\to_i$ 
denotes the actions of Player~$i$.
Similarly, for each $i \in \{1,2\}$, the configurations $V_i$ are controlled by 
Player~$i$. In the following, Player 1 will
also be called ``Scheduler,'' and Player~2 ``Process''. Given a set
$I_0 \subseteq S$ of initial configurations and a~set $F \subseteq S$ of 
final (a.k.a. target) configurations, the goal of Player~2 is to reach $F$
from $I_0$,
while the goal of Player~1 is to avoid it. 
%To define this more formally,
%we need the definition of strategies. 
More formally, a \defn{strategy} for Player $i$ is
a~partial function $f: S^*V_i \to S$ such that, for each $v \in S^*$ and
$p \in V_i$, if $vp$ is a~path in $\struct$ and $p$ is not a dead end 
(i.e., $p \to_i q$ for some $q$), then
$f(vp)$ is defined in such a way that $p \to_i f(vp)$. Given a strategy $f_i$
for Player~$i \in \{1,2\}$ and an initial configuration $s_0 \in S$, we can define a~unique
(finite or infinite) path in $\struct$ %from $s_0$:
such that $\pi: s_0 \to_{j_1} s_1 \to_{j_2} \cdots$
where $s_{j_{k+1}} = f_i(s_0s_1\ldots s_{j_k})$ for $i \in \{1,2\}$
is the (unique) configuration s.t.~$s_{j_k} \in V_i$. 
%In other words, 
%the sequence is constructed by requiring each player to move according to
%her strategy. 
Player~2 \emph{wins} iff some configuration 
in $F$ appears in $\pi$, or if the path is finite and the last configuration
belongs to Player~1. Player~1 \emph{wins} iff Player~2 does not
win; we say Player~2 \emph{loses}. A~strategy $f$ for Player $i$ 
is \defn{winning} from $I_0$ if for each strategy 
$g$ of Player~$3-i$, the unique path in $\struct$ from each 
$s_0 \in I_0$
witnesses a win for Player~$i$. Such games (a.k.a.~\emph{reachability games})
are \emph{determined} (see e.g. \cite[Proposition 2.21]{ALG-book}): 
either Player~1 has a winning strategy or Player~2 has a winning strategy.

%The reachability game
%
%Given $F \subseteq
%S$, the \emph{reachability game} asks whether Player 

%(i.e. $s_{j_k}$ belongs
%to Player $i$).

%
%In the context of parameterised systems, 
%Player 1 will often be synonymous with \defn{Scheduler} (a.k.a. Adversary), 
%while Player 2 will often be synonymous with \defn{Process}. As we shall see
\OMIT{
later, to deal with
liveness over parameterised systems it suffices to deal with the following
computational problem for reachability games: given an arena $\struct = 
\transysG$, a set $I_0 \subseteq S$ of initial configurations, 
and a set $F \subseteq S$ of final configurations, decide whether
%representation of) the set
%$I_0 \subseteq S$ of initial configurations from which 
Player 2 has a winning strategy from \emph{all of} $I_0$ to reach $F$ in 
$\struct$? [In other words, whether $I_0$ is a subset of the winning 
configurations for Player 2.] 
}

\begin{convention}
    For notational simplicity, w.l.o.g., we make the following assumptions on our reachability
    games. They suffice for the purpose of proving liveness for parameterised
    systems. 
% The techniques can be easily adapted when these assumptions are lifted.
    \begin{description}
        \item[(A0)] Arenas are \defn{strictly alternating}, i.e., 
%each configuration belongs to at most one of the players
%(i.e. $\pi_1(\to_1) \cap \pi_1(\to_2) = \emptyset$) and that 
        a move made by a player does not take the game back to her 
        configuration ($\mathit{post}_{\to_i}(S) \cap V_i = \emptyset$, for each 
        $i \in \{1,2\}$).
    \item[(A1)] %Initial and 
        Initial and final configurations belong to
        Player 1, i.e., $I_0, F \subseteq V_1$
    \item[(A2)] Non-final configurations are not dead ends:
        $\forall x\in S\setminus F, \exists y: x \to_1 y \vee x \to_2 y$.
    \end{description}
    \label{con:games}
\end{convention}
%Let $V_i = \pi_1(\to_i)$, i.e., the set of configurations that belong to
%Player $i$.
%Since we are considering reachability games, it suffices to restrict to 
%strategies for each Player $i$ that are \defn{purely positional}, i.e., 
%functions $f_i: V_i \to S$. For simplicity, we will also assume that
%$I_0 \subseteq V_1$. This assumption is made only for simplicity in our
%presentation and suffices for our examples, but can easily be removed.

\smallskip
\noindent
\textbf{Markov chains:}
%We first recall the definition of Markov chains.
A (discrete-time) \defn{Markov chain} (a.k.a. \defn{DTMC}) 
is a structure of the form $\struct = \DTMCAP$ where $\transysDom$ is
a set of configurations, $\delta$ is a function that associates a 
configuration $s \in \transysDom$ with a probability distribution over a
sample space $D \subseteq \transysDom$ (i.e. the probability of going to a 
certain configuration from $s$), and $\labeling:\AP \to 2^{S}$ maps
atomic propositions
to subsets of $\transysDom$.
% [We tacitly assume countably many atomic propositions.]
In what follows, we will assume 
that each
$\delta(s)$ is a~discrete probability distribution with a finite sample space.
This assumption allows us to simplify our notation: a~DTMC~$\DTMCAP$ can be
seen as a~transition system $\transysAP$ with a transition probability function
$\delta$ mapping a transition $t = (s,s') \in\ \to$ to a value $\delta(t) \in 
(0,1]$ such that $\sum_{s' \in \mathit{post}(s)} \delta((s,s')) = 1$.
That is, transitions with zero probabilities are removed from $\to$.
We will write $s \tran{p} s'$ to denote $s \to s'$ and that $\delta((s,s')) = p$. 
The \defn{underlying transition graph} of a~DTMC~$\DTMCAP$ is the transition system
$\transysAP$ with $\delta$ omitted.
%$\transysR$, where $(s,s') \in R$ iff $s \tran{p} s'$.
%When the meaning is clear, we denote $\to$ to also denote $R$.
Given a finite path $\ModelRun = s_0 \to \cdots \to s_n$ from the initial configuration
$s_0 \in S$, let $\run_{\ModelRun}$ be the set of all finite/infinite paths with
$\ModelRun$ as a prefix, i.e., of the form $\ModelRun \odot \ModelRun'$ for 
some finite/infinite
path $\ModelRun'$. 
Given a set $F \subseteq S$ of
target configurations, the probability $\Prob_{\struct}(s_0 \models \Diamond F)$ 
(the subscript $\struct$ may be omitted when understood)
of reaching $F$ 
from $s_0$ in $\struct$ can be defined using a standard cylinder construction 
(see e.g \cite{marta-survey}) as follows. 
For each finite path $\pi = s_0 \to \cdots \to s_n$ 
in $\struct$ from $s_0$, we set $\run_{\ModelRun}$ to be a basic cylinder,
to which we associate the probability $\Prob(\run_{\ModelRun}) =
\prod_{i=0}^{n-1} \delta((s_i,s_{i+1}))$.
        %\times \cdots \times 
        %\delta((s_{n-1},s_n)).
This gives rise to a unique probability measure for the $\sigma$-algebra
over the set of all runs from $s_0$. The probability 
$\Prob(s_0 \models \Diamond F)$ is then the probability of the event~$F$
containing all paths in $\struct$ with some ``accepting'' finite path as
a prefix, i.e., a finite path from $s_0$ ending in some configuration in $F$.
In general, given an LTL formula $\varphi$ over $\AP$, the event containing all 
paths from $s_0$ in $\struct$ satisfying $\varphi$ is measurable 
\cite{Vardi85} and its probability value $\Prob(s_0 \models \varphi)$ 
is well-defined.

\noindent
\textit{Notation:}
Whenever understood, we will omit 
mention of $\labeling$ from $\DTMCAP$.
%and $\transysAP$ and simply write $\DTMC$ and $\transys$, respectively. 

%\input{prob}
%%%%%%%%%%%%%%%%%%%%%%%%%%%%%%%%%%%%%%%%%%%%%%%%%%%%%%%%%%%%%%%%%%%%%%%%%%%%%%%%
\vspace{-2.0mm}
\section{Abstract Models of Probabilistic Concurrent Programs}\label{sec:semantics}
\vspace{-1.0mm}
%%%%%%%%%%%%%%%%%%%%%%%%%%%%%%%%%%%%%%%%%%%%%%%%%%%%%%%%%%%%%%%%%%%%%%%%%%%%%%%%

In this section, we recall the notion of Markov Decision Processes (MDPs) and
fair MDPs \cite{BK08}. These serve as our abstract models of probabilistic
concurrent programs. We then define the notion of finitary fairness 
\cite{AH98} and discuss its basic properties in the setting of MDPs.

\subsection{Markov Decision Processes}
A \defn{Markov decision process} (\defn{MDP}) is a 
%transition
%system 
strictly alternating arena $\struct = \transysMDP$ such that $\langle S; 
\to_2 \rangle$ is a DTMC, i.e., $\to_2$ is associated with some transition
probability function, and that the atomic propositions are not important. 
%The \emph{underlying graph} of $\struct$ is the transition system
% $\langle \transysDom; \to_1, \to_2' \rangle$, where $s \to_2' s'$ iff
% $s \tran{p}_2 s'$ for some $p \in [0,1]$. When the meaning is clear,
% we will use $\to_2$ to also denote $\to_2'$. We will assume that uuuuu
%and that each state cannot have both $\to_1$ and $\to_2$ outgoing
%transitions, i.e., $pre_{\to_1}(S) \cap pre_{\to_2}(S) = \emptyset$.
Intuitively, the transition relation $\to_1$ is nondeterministic
(controlled by a ``demonic'' scheduler), whereas the transition relation 
$\to_2$ is 
probabilistic. By definition of arenas, the configurations of the MDPs are
partitioned into the set $V_1$ of \emph{nondeterministic states} (controlled by 
Scheduler) and the set $V_2$ of \emph{probabilistic states}. Formally,
%The condition that 
$pre_{\to_1}(S) \cap pre_{\to_2}(S) =~\emptyset$.
%More formally, given an initial state $s_0 \in S$, a \defn{scheduler} is a function 
%mapping any path $\pi = s_0, \ldots, s_n$ in $\struct$ ending in a 
%nondeterministic state $s_n$ to a state $s \in S$ such that $s_n \to_1 s$.
Each Scheduler's strategy\footnote{Also called ``scheduler'' or 
``adversary'' for short.} $f: S^* V_1 \rightarrow S$ gives rise to an 
infinite-state DTMC with the underlying transition system
$\struct_f = \langle S'; \to_3, \labeling \rangle$ and the transition probability function
$\delta'$ defined as follows. Here, $S'$ is the set of all 
finite/infinite paths $\pi$ from $s_0$. For each state $s' \in S$ and each path 
$\pi$ from $s_0$ ending in some state $s \in S$,
%
%$s \in V_1$ and $s' \in V_2$, 
we define $\pi \to_3 \pi s'$ iff:
(1) if $s \in V_1$ is a nondeterministic state, then $f(\pi) = s'$, and
(2) if $s \in V_2$ is a probabilistic state, then $s \to_2 s'$. %The DTMC $\struct'$
%has the transition probability $\delta'$ defined as follows. 
Intuitively, $\struct_f$ is an unfolding of the game arena $\struct$ (i.e. a 
disjoint union of trees) where branching only occurs on probabilistic states.
Transitions
$\pi \to_3 \pi s'$
%, where $p = \pi \odot s$ and  $q = \pi \odot s, s'$, 
satisfying 
Case (1) have the probability $\delta'((\pi,\pi s')) = 1$; otherwise, its 
probability is $\delta'((\pi,\pi s')) = \delta((s,s'))$. We let $\labeling$ be
a function mapping each subset $X \subseteq \transysDom$ (used as an atomic
proposition) to the set of all finite paths in $\struct_f$ from $s_0$ to
$X$.
Since $\struct_f$
is a DTMC, given an LTL formula $\varphi$ over subsets of $\transysDom$ as 
atomic propositions, the probability $\Prob_{\struct_f}(s_0 \models \varphi)$ 
of satisfying $\varphi$ in $\struct$ from $s_0$ under the scheduler $f$ is 
well-defined. In particular, $\Prob_{\struct_f}(s_0 \models \Diamond F)$
%well-defined. Loosely speaking, this 
is the probability of reaching $F$ from $s_0$ in $\struct$ under the scheduler 
$f$. The probability $\Prob_{\struct,\mathcal{C}}(s_0 
\models \varphi)$ %of reaching $F$ from $s_0$ in the MDP $\struct$ under a
of satisfying $\varphi$ from $s_0$ in the MDP $\struct$ under a 
class $\mathcal{C}$ of schedulers is defined
to be the infimum of the set of all probabilities $\Prob_{\struct_f}(s_0 
\models \varphi)$ over all $f \in \mathcal{C}$. We will omit mention of
$\mathcal{C}$ when it denotes the class of all schedulers.

An MDP is \defn{weakly-finite} \cite{EGK12} if from each configuration,
the set of all configurations that are reachable from it (in the
underlying transition system of the MDP) is finite. 
Note that the state space of weakly-finite MDPs can be infinite.  
The restriction of weak finiteness is another way of defining the notion of
\defn{parameterized
systems}, which are an infinite family of finite-state systems. 
% Many simple verification problems (e.g. reachability) over parameterised systems 
% are known to be undecidable \cite{AK86} (also, see \cite{sasha-book}). 
Weakly-finite MDPs capture many interesting probabilistic concurrent systems 
in which each process is finite-state; this is the case for 
many probabilistic distributed protocols.

\OMIT{
In this paper, we are only concerned with the following \defn{liveness problem
for MDP}: given an MDP $\struct = \transysMDP$, a set $I_0 \subseteq S$ of 
initial states, and a set $F \subseteq S$ of target states, determine whether 
$\Prob_{\struct}(s_0 \models \Diamond F) = 1$ for every initial state
$s_0 \in I_0$. Note that this is equivalent to proving that
$\Prob_{\struct_f}(s_0 \models \Diamond S^*F) = 1$ for every initial state
$s_0 \in I_0$, and every scheduler $f$. 
Such a problem (which has many other
names: probabilistic universality, almost-sure probabilistic reachability, and 
almost-sure liveness) is commonly studied
in the context of MDPs (e.g. see \cite{CY95,Vardi85,LS07,marta-survey}).
In the case of probabilistic
parameterised systems
 $\mathcal{F} = \{\struct_i\}_{i \in \N}$, where each $\struct_i$ is a
 finite-state MDP, we may view $\mathcal{F}$ as an infinite-state MDP defined by
the disjoint union of $\struct_i$ over all $i \in \N$.  In this way, proving
liveness for $\mathcal{F}$ simply means proving liveness for \emph{each}
instance $\struct_i$ in $\mathcal{F}$.
}

\subsection{Fair Markov Decision Processes}

A \defn{fair Markov decision process (FMDP)} is a structure of the form
$\struct = \FMDP$, where $\transysMDP$ is an MDP, $\WF$ is a weak fairness 
(a.k.a. \emph{justice}) requirement, and $\SF$ is a strong 
fairness (a.k.a. \emph{compassion}) requirement. More precisely, a \defn{weak 
fairness requirement} is a  set (at most countably infinite) of \defn{atomic 
weak 
fairness requirements} of the form $\Diamond \Box A \Implies \Box\Diamond B$,
for some $A, B \subseteq \transysDom$. 
%\anthony{Should $A,B$ be able to be in $V_2$ as well?}
Here, the $\Box$ and $\Diamond$ 
modalities are the standard ``always'' 
and ``eventually'' LTL operators.
The set $A$ (resp.~$B$) will be called the \defn{premise} 
(resp.~\defn{consequence}).
%of the atomic fairness requirement
%$\Diamond\Box A\Implies \Box\Diamond B$.
\OMIT{
Intuitively, each atomic weak
fairness requirement restricts schedulers to paths on which configurations in $B$ must
appear infinitely often, whenever it is the case that after some point only 
configurations in $A$ are seen on the paths. 
}
Intuitively, if $A$ is interpreted as ``Process 1
is waiting to move'' and $B$ as ``Process 1 is chosen'', then this 
fairness requirement
may be read as: at no point can Process 1 be continuously waiting to move 
without being chosen.
%permission, then permission must be granted for Process 1 infinitely often.
In addition, a \defn{strong
fairness requirement} is a set (again, at most countably infinite) of \defn{atomic 
strong 
fairness requirements} of the form $\Box \Diamond A \Implies \Box\Diamond B$,
for some $A, B \subseteq \transysDom$. 
%\anthony{Should $A,B$ be able to be in $V_2$ as well?}
\OMIT{
Loosely speaking, each atomic strong
fairness requirement restricts schedulers to paths on which configurations in 
$B$ must appear infinitely often, whenever it is the case that $A$ appears
infinitely often. 
}
Using the above example,
a strong fairness requirement reads: if Process 1 is waiting to move infinitely 
often, then it is chosen infinitely often. 
\OMIT{Observe that strong fairness implies weak fairness.}
As before, the set $A$ (resp.~$B$) will be called the \defn{premise} 
(resp.~\defn{consequence}).
%of the atomic fairness requirement
%$\Box\Diamond A\Implies \Box\Diamond B$. 
In the following, when it is clear whether a fairness requirement is a~justice or 
a~compassion, we will denote it by the pair $(A,B)$ of premise and consequence.
% fairness
% When it is clear whether a pair
% $(A,B)$ of premise/consequence is justice or compassion, we will denote the fairness requirement

Given an FMDP $\struct = \FMDP$, a configuration $s_0 \in \transysDom$, and
a scheduler $f$, since each atomic fairness requirement is an LTL formula and
there are at most countably many atomic fairness requirements, the set of paths 
from $s_0$ in the DTMC $\struct_f$ induced by $f$ satisfying $\SF$ and $\WF$ is 
measurable. We say that a scheduler $f$ is \defn{$\struct$-fair} if 
$\Prob_{\struct_f}(s_0 \models \SF \wedge \WF) = 1$ for every initial
configuration $s_0$. The fairness conditions $(\SF,\WF)$ are 
\defn{realizable} in $\struct$ if there exists at least one $\struct$-fair
scheduler.

A natural fairness notion we consider in this paper is \emph{process fairness}, which
asserts that each process is chosen infinitely often. 
For this notion of fairness, we can assume that the
consequence $B$ of each atomic fairness requirement asserts that a particular 
process is chosen. 
We make one simplifying assumption: 
\emph{each process is always enabled} (i.e., can always be chosen by the scheduler). 
This assumption is reasonable since we can always introduce an idle transition 
for each process.
Under this assumption, we have that
\emph{from each $v_1 \in V_1$, there 
exists a transition $v_1 \to_1 v_2$ for some $v_2 \in B$}. 
\OMIT{
    asserting that every process is chosen infinitely often.} %This is true in all our examples. 
%Note that realizability is not a vacuous condition; there are 
%simple fairness constraints that are not realizable even in an MDP with only
%two configurations \cite{BK08}. 
This implies that our fairness conditions are always realizable and that
the probability $\Prob_{\struct,\mathcal{C}}(E)$ of event $E$ over 
the set of all $\struct$-fair schedulers is well-defined. 
\OMIT{
In
the sequel, for an FMDP $\struct$, we will simply denote 
$\Prob_{\struct,\mathcal{C}}(E)$ as $\Prob_{\struct}(E)$.
}

\OMIT{
Given a measurable set $E$ of paths from $s_0$ and an FMDP $\struct = \FMDP$, 
the probability $\Prob_{\struct}(E)$ can be defined as the probability 
$\Prob_{\struct,\mathcal{C}}(E)$, where $\mathcal{C}$ is the class of 
all schedulers that satisfy $\SF$ and
$\WF$ (i.e. the induced DTMCs contain only paths that satisfy $\WF$ and
$\SF$). 
}

\OMIT{
We consider the important special case of \defn{process fairness}, which
asserts that every process is chosen infinitely often. %This is true in all our examples. 
%5Furthermore, we may assume that
This implies that 
\emph{Scheduler may choose a process at will}, i.e., the consequence set of
each fairness constraint in the fair MDP $\struct$ belongs to $V_2$ (Process)
and can be satisfied within one step from each node in $V_1$. 
\anthony{This requirement contradicts our definition of fairness, where each
premise and consequence sets are subsets of $V_1$. What is the best way to
express this? Shall we say just write a formula like $\forall x\exists y\forall
x'( x \in A \Longrightarrow x \longrightarrow y \wedge y \longrightarrow x'
\wedge x' \in B)$?}
We call such
consequence sets \defn{chosen sets}. 
}

\subsection{Finitary Fairness}
Given an FMDP $\struct = \FMDP$, a configuration $s_0 \in \transysDom$, 
and a number $k \in \N$,
%it is known that the set of paths from $s_0$ in the DTMC
%$\struct_f$ induced by $f$ satisfying $\SF$ and $\WF$ is measurable 
%(e.g. \cite{Vardi85}). 
we say that a scheduler $f$ is \defn{$\struct$-$k$-fair} (or \defn{$k$-fair}
whenever $\struct$ is understood) if for each atomic fairness requirement 
$(A,B)$:
\begin{enumerate}
    \item if $(A,B)$ is justice, then (the underlying graph of) $\struct_f$ 
        contains no path $\pi$ of length $k$ satisfying the LTL formula
        $\Box (A \wedge \neg B)$.
    \item if $(A,B)$ is compassion, then $\struct_f$ contains no path
        $\pi$ satisfying the LTL formula $\psi_k \wedge \Box \neg B$, where
        $\psi_0 := \mathit{true}$ and $\psi_i := \Diamond (A \wedge \psi_{i-1})$ for
        each $i > 0$.
\end{enumerate}
In other words, a premise in a justice requirement cannot be satisfied for $k$
\emph{consecutive} steps without satisfying a consequence, while a premise in a
compassion requirement cannot be satisfied for $k$ (not necessarily consecutive)
steps without satisfying a consequence. A scheduler is said to be \defn{finitary
fair (fin-fair)} if it is $k$-fair for some $k$. 
%$\Prob_{\struct_f}(s_0 \models \SF \wedge \WF) = 1$ for every initial
%configuration $s_0$. 
The fairness conditions $(\SF,\WF)$ are said to be 
\defn{finitary-realizable (fin-realizable)} in $\struct$ if there exists at 
least one fin-fair scheduler. 
%Note that realizability is not a vacuous condition; there are 
%simple fairness constraints that are not realizable even in an MDP with only
%two configurations \cite{BK08}. 
Under this assumption, the probability $\Prob_{\struct,\mathcal{C}}(E)$ of 
an event $E$ over the set $\mathcal{C}$ of all fin-fair schedulers is 
well-defined. In what follows, for an FMDP $\struct$, we will simply denote 
$\Prob_{\struct,\mathcal{C}}(E)$ as $\Prob_{\struct}(E)$.
In this paper, we propose to
study \defn{termination of probabilistic concurrent programs under finitary 
fairness}, i.e., to~determine whether $\Prob_{\struct,\mathcal{C}}(s_0 \models \Diamond
\finals) = 1$, where $\mathcal{C}$ is the class of all fin-fair schedulers.

% One special property of weakly-finite MDPs is that, when checking qualitative 
% properties (i.e. whether a property expressed in, say, LTL is true with
% probability 1), the actual probability values do not matter. 
The following proposition states one special property of weakly-finite MDPs.
\begin{proposition}
Let $\struct$ and $\struct'$ be two weakly-finite fair MDPs with identical 
    underlying transition systems (but possibly different probability values). 
    For each set $\finals$ of final states, %each scheduler $f$, 
    and each initial
    configuration $s_0$, it is the case that $\Prob_{\struct}(s_0 \models 
    \Diamond \finals) = 1$ iff $\Prob_{\struct'}(s_0 \models \Diamond \finals) = 1$.
    \label{prop:removeProb}
\end{proposition}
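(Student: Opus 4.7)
The plan is to reduce the statement to the well-known fact that, over \emph{finite-state} Markov chains and MDPs, qualitative almost-sure reachability depends only on the \emph{supports} of the transition distributions (equivalently, only on the underlying transition graph), and not on the actual probability values. This observation goes back to Hart, Sharir and Pnueli \cite{HSP83} and was exploited by Vardi \cite{Vardi85}; weak finiteness is exactly the hook that lets us use it in our parameterized setting.

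First, I would fix an initial configuration $s_0$ and use weak finiteness to pass to a finite sub-arena. Let $R(s_0) \subseteq S$ be the set of configurations reachable from $s_0$ in the underlying transition system. Since $\struct$ and $\struct'$ share this transition system, $R(s_0)$ is the same finite set in both. Let $\struct|_{R(s_0)}$ and $\struct'|_{R(s_0)}$ be the restrictions; they have identical Player-1 moves, identical Player-2 supports, and differ only on the numeric probability values assigned to the $\to_2$-edges. The atomic fairness requirements $(\SF,\WF)$ also restrict identically, since they are defined purely from subsets of $S$. In particular, for every $k \in \N$ the set of $k$-fair schedulers on the restricted MDPs coincides in $\struct$ and $\struct'$, because $k$-fairness is a condition on \emph{paths in the underlying graph} of the induced DTMC $\struct_f$, which again depends only on supports.

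Next, I would invoke the classical qualitative lemma: for a finite DTMC $\mathcal{D}$ and a measurable set $F$ of target configurations, $\Prob_{\mathcal{D}}(s_0 \models \Diamond F) = 1$ holds iff every bottom strongly connected component reachable from $s_0$ intersects $F$ — a purely graph-theoretic condition. Applied to the induced DTMCs $\struct_f$ and $\struct'_f$ for any common fin-fair scheduler $f$, this immediately yields $\Prob_{\struct_f}(s_0 \models \Diamond \finals) = 1$ iff $\Prob_{\struct'_f}(s_0 \models \Diamond \finals) = 1$. Since the class $\mathcal{C}$ of fin-fair schedulers is the same on both sides, taking the infimum over $f \in \mathcal{C}$ (and observing that the infimum is either $1$ or strictly less than $1$, with the dichotomy witnessed by the graph structure of at least one induced DTMC) gives $\Prob_{\struct}(s_0 \models \Diamond \finals) = 1 \iff \Prob_{\struct'}(s_0 \models \Diamond \finals) = 1$.

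The only delicate point is the handling of schedulers, which in our definition are functions $f : S^* V_1 \to S$ on the full (possibly infinite) state space, so strictly speaking one should argue that a fin-fair scheduler on $\struct$ can be viewed as a fin-fair scheduler on $\struct'$ (and vice versa) without changing its induced DTMC on the reachable part from $s_0$. This is immediate: only the behaviour of $f$ on histories staying in $R(s_0)$ matters for $\Prob_{\struct_f}(s_0 \models \Diamond \finals)$, and on such histories the two MDPs offer $f$ exactly the same set of legal successors. So the main obstacle is really just bookkeeping; the mathematical content is the classical qualitative reachability lemma, applied component-wise for each initial state via weak finiteness.
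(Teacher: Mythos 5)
There is a genuine gap at the heart of your argument, namely the per-scheduler claim that for any common fin-fair scheduler $f$ one has $\Prob_{\struct_f}(s_0 \models \Diamond \finals) = 1$ iff $\Prob_{\struct'_f}(s_0 \models \Diamond \finals) = 1$. First, the BSCC lemma you invoke is a statement about \emph{finite} DTMCs, but the induced DTMC $\struct_f$ of a history-dependent scheduler $f: S^*V_1 \to S$ is the infinite tree-shaped unfolding defined in the paper; it has no bottom SCCs, so the lemma does not apply to it, and you cannot restrict to memoryless schedulers since fin-fair schedulers genuinely need memory. Second, and more seriously, the claim itself is false. Consider a finite MDP in which a probabilistic state $b$ moves to $a_H$ with probability $p$ and to $a_T$ with probability $1-p$, from both $a_H$ and $a_T$ the scheduler may either return to $b$ or move to a probabilistic state leading surely to $\finals$, and let $f$ take the latter option exactly when the empirical fraction of $H$-outcomes in the history so far exceeds $0.6$. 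With $p = 0.7$ the strong law of large numbers gives $\Prob_{\struct_f}(\Diamond \finals) = 1$, while with $p = 0.5$ the event that the fraction never exceeds $0.6$ has positive probability (a negative-drift random walk stays nonpositive forever with positive probability), so $\Prob_{\struct'_f}(\Diamond \finals) < 1$: same underlying graph, same scheduler, different qualitative answer. Such an $f$ is vacuously fin-fair when $\WF$ and $\SF$ are empty, and the example can be decorated to respect process-fairness. The moral is that probability-independence of qualitative reachability holds only after quantifying over schedulers --- the witnessing schedulers on the two sides may differ --- and not scheduler by scheduler, so your final step of ``taking the infimum over $f$'' is resting on a false premise.

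The paper's proof therefore works at the level of the quantified statement rather than per scheduler: it removes $\finals$, rewrites ``$\Prob_{\struct}(s_0 \models \Diamond \finals) = 1$ over all fin-fair schedulers'' as ``no scheduler of the restricted finite MDP attains positive probability of the fairness property $\varphi$,'' and appeals to the fact (Theorem 10.122 of Baier--Katoen) that for \emph{limit} linear-time properties such as $\varphi$ this universally quantified qualitative statement depends only on the underlying graph. Your preparatory steps --- restricting to the finite reachable part $R(s_0)$ via weak finiteness and observing that the fairness requirements and scheduler classes coincide on both sides --- are sound and match the paper's setup; it is only the transfer step that must be replaced by an argument of this quantified form.
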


\begin{proof}
This proposition can be proved using basic machineries from 
probabilistic model checking \cite{BK08}.
Consider the finite MDPs $\struct^1$ and $\struct^2$ that are obtained from
$\struct$ and $\struct'$ by removing configurations in $\finals$ and their
fairness requirements $\varphi$.
It suffices to prove the following: for all schedulers
$\Prob_{\struct^1_\sigma}(s_0 \models \varphi) = 0$ iff for all schedulers
$\sigma$ $\Prob_{\struct^2_\sigma}(s_0 \models \varphi) = 0$.
This follows from standard results from probabilistic model checking
\cite[Theorem 10.122]{BK08} since $\varphi$ is a limit linear-time property.
\end{proof}

By Proposition~\ref{prop:removeProb}, when dealing with 
almost-sure finitary-fair termination of weakly-finite MDPs, we only care whether a transition 
has a zero or a non-zero probability, i.e., if it is non-zero, then the exact
value is irrelevant. Incidentally, the same also holds for other properties 
including
almost-sure termination without fairness and qualitative temporal specifications
\cite{HSP83,PnueliZ86,LR16}.
\emph{For this reason, we may simply omit these probability values from our 
symbolic representation of weakly-finite MDPs, which we will do from the next 
section onwards.} 
\OMIT{
A similar observation was already made for almost-sure
termination for probabilistic parameterised concurrent systems in 
\cite{LR16}.
}

%*******************************************************************************
\vspace{-0.0mm}
\subsection{Herman's Protocol} \label{sec:herman}
\vspace{-0.0mm}
%*******************************************************************************

Herman's protocol~\cite{Her90} is a distributed self-stabilization algorithm 
for a population of processes organized in a ring. The \defn{correct} 
configurations are those where exactly one process holds a token.
%Exactly one process is expected to hold a token at any time.
If, through some error, the ring enters an \defn{erroneous} configuration (in 
which multiple processes hold tokens), Herman's protocol ensures that the 
system will \defn{self-stabilize}: it will
almost surely go back to a configuration with only one token.

Let us discuss how the protocol works in more detail.
Fix $N\geq 3$ processors organized in a ring.
\OMIT{
Each process may or may not hold a token.
In each time step, an adversarial scheduler picks a process.}
If a~chosen process does not hold a token, then it can perform an idle
transition (i.e. do nothing). If a chosen process holds a token, then it can 
keep holding the token with probability $\frac{1}{2}$
or pass it on to its clockwise neighbor (the process $(i+1)\bmod N$, for processes numbered
$0,\ldots,N-1$) with probability~$\frac{1}{2}$. 
If a~process currently holds a token and receives another token from its (counter-clockwise) neighbor,
then the two tokens are merged\footnote{Herman \cite{Her90} describes a more 
general protocol in which tokens can be merged/destroyed with some probability. 
We consider this restriction for simplicity of presentation.}
into one, leaving the process with one token.

Formally, Hermann's protocol can be modeled as a weakly-finite 
Markov decision process whose states are vectors in $\set{\bot,\top}^*$.
For each $N$, the state of the protocol is described by a vector of $N$ bits,
with the $i$-th bit being 1 iff the $i$-th process holds a token.
From a state $\mathbf{v}$, the scheduler picks a process $i \in \set{0,\ldots, N-1}$.
Given a chosen process~$i$, the new state remains
$\mathbf{v}$ if the chosen process~$i$ did not hold a token ($\mathbf{v}(i) =~\bot$).
If $\mathbf{v}(i) = \top$, the new state is
$\mathbf{v}$ with probability~$\frac{1}{2}$ and 
$\mathbf{v} \xor e_i \xor e_{(i+1)\bmod N}$ with probability~$\frac{1}{2}$.
Here, $e_i$ denotes a~vector with $\top$ in the $i$-th position and $\bot$ everywhere else,
and $\xor$ is the XOR operation.
We want to ensure that, starting from an arbitrary initial assignment of tokens,
any population 
self-stabilizes %eventually reaches a state where there is exactly one token
with probability~1.

% We call such states \emph{terminating states}, and the termination problem asks if the protocol
% eventually reaches a terminating state with probability one against all schedulers.
Process fairness for Herman's protocol is a set of $N$
atomic fairness requirements, each asserting that the process $i$ is executed
infinitely often, for each $i \in \{1,\ldots,N\}$.
Unfortunately, Herman's protocol does \emph{not} terminate with probability 1 
against some fair schedulers. 
To see this, consider the start state $s_0 = 
(\top,\bot,\top)$. Let us call the token held by Process 0 ``the first token'', and the
token held by Process 2 ``the second token''.
    Define a \defn{round} 
    as the following sequence of moves by the scheduler: keep choosing the 
    process that holds the first token until it passes the token to the
    right, and do the same to the second token. For example, the
    two configurations obtained after completing the first and second rounds
    from $s_0$ are, respectively, $(\top,\top,\bot)$ and $(\bot,\top,\top)$. To see that the
    scheduler is fair, for each integer $i 
    > 0$, the probability that the $i$-th round is not completed is 0 since
    the probability that one of the tokens will be kept at the same 
    process for an infinite amount of time is 0. Therefore, the probability that
    some round is not completed is also 0. Completing two rounds ensure that all
    the processes are picked. Therefore, every process will be chosen with 
    probability~1. On the other hand, observe that correct configurations are not
    seen in the induced DTMC, showing that self-stabilization holds with
    probability 0 under this scheduler.
%for example, a scheduler that picks a fixed process, say process $0$, in each step.
% Clearly, the protocol does not reach a terminating state under this schedule.

Herman's protocol can be shown to self-stabilize with probability 1 under all 
fin-fair schedulers, which can be proved by our fully-automatic verification algorithm (presented later in the paper).
\OMIT{
Loosely speaking, for each bound $k \in \N$, 
each process will be chosen at least once every $kN$ moves of the scheduler.
Therefore, there is a non-zero probability that each token but the rightmost
gets passed to the right and that the rightmost token stays. 
}
\OMIT{
each token
has a non-zero probability to get passed to the right when chosen, and that the
rightmost token (i.e. held by process with the largest ID) has a non-zero
probability to get 
}

\newpage
%%%%%%%%%%%%%%%%%%%%%%%%%%%%%%%%%%%%%%%%%%%%%%%%%%%%%%%%%%%%%%%%%%%%%%%%%%%%%%%%
\vspace{-3.0mm}
\section{Regular Model Checking: A Symbolic Framework} \label{sec:symbolic}
\vspace{-1.0mm}
%%%%%%%%%%%%%%%%%%%%%%%%%%%%%%%%%%%%%%%%%%%%%%%%%%%%%%%%%%%%%%%%%%%%%%%%%%%%%%%%

%!!!!!!!!!!!!!!!!!!!!!!!!!!!!!!!!
\enlargethispage{4mm}
%!!!!!!!!!!!!!!!!!!!!!!!!!!!!!!!!

In this section, we recall \defn{regular model checking} (see e.g. 
\cite{Parosh12,rmc-thesis,anthony-thesis}), a symbolic framework for specifying 
infinite-state 
systems based on finite automata and regular transducers and developing automatic
verification (semi-)algorithms. 
% The framework is known to
% be expressive in that many systems that arise in practice (e.g. concurrent
% programs communicating via channel systems, distributed protocols, Turing
% machines, etc.) can be modelled within the framework. Equivalence with
% logical formalisms (e.g. MSO) provides a more user-friendly way to specify
% systems \cite{LTL-MSO}. In addition, the framework benefits from the numerous 
% well-known automata algorithms, which facilitate the development of
% fully-automatic verification algorithms. 

% \subsection{Regular specifications}

A transition system $\struct = \transysMDP$ is specified in the 
framework 
% (a.k.a. \defn{regular transition systems})
as a regular language $\transysDom$ (e.g. as 
a~regular expression over some alphabet $\ialphabet$), and two 
``regular relations'' $\to_1, \to_2 \subseteq \ialphabet^* \times 
\ialphabet^*$. For simplicity, in the following we will assume that 
$\transysDom = \ialphabet^*$.
How do we specify regular relations? One standard way is to
restrict to length-preserving relations (i.e. the relation may only contain
a pair of words of the same length) and specify such relations as regular
languages over the alphabet $\ialphabet \times \ialphabet$. There is, then,
a simple one-to-one correspondence between the set of words over $\ialphabet 
\times \ialphabet$ and the set of all pairs of words over $\ialphabet$ of the 
same length. This can be achieved by mapping a pair $(v,w)$ of words
        $\ialphabet$ with $\Length{v} = \Length{w} = n$ to a word
        $v \otimes w$,
% (a.k.a. the \defn{convolution} of $v$ and $w$, see, e.g., \cite{BG04}), 
        defined as $(v_1,w_1)(v_2,w_2)\cdots
        (v_n,w_n)$ whenever $v = v_1\cdots v_n$ and $w  = w_1\cdots w_n$.

% \subsection{Regular proofs}

Proving that a property $\varphi$ holds over a transition system $\struct$ is 
done ``in a regular 
    way,'', by finding a ``regular proof'' for the property.
    For~example, if $\varphi$ asserts that the set $\mathit{Bad}$ of bad states can never
    be reached, then a regular proof amounts to finding an inductive invariant
    $\mathit{Inv}$ in the form of a regular language
\cite{Parosh12,rmc-thesis} that does not 
        intersect with $\mathit{Bad}$, i.e., $\mathit{Bad} \cap \mathit{Inv} = \emptyset$, $S_0 
    \subseteq \mathit{Inv}$ ($S_0$ is a regular set of initial states), and
    $post_{\to}(\mathit{Inv}) \subseteq \mathit{Inv}$, where ${\to} = {\to_1 \cup \to_2}$. Since
    regular languages are effectively closed under boolean operations and 
    taking pre/post images w.r.t.\ regular transducers, an algorithm for
    verifying the correctness of a given regular proof can be obtained by
    using language inclusion algorithms for regular 
    automata, e.g., \cite{BP13,abdulla-antichain}. The framework of regular proofs
    is incomplete in general since it could happen that there is a proof, but 
    no regular proof. 
% [One could see it from the fact that the problem of
% safety is co-r.e. complete, and completeness for regular proofs would
% imply decidability of safety for regular model checking, which is
% not the case.] 
    The pathological cases when only non-regular proofs 
    exist do not, however, seem to frequently occur in practice, e.g., see
\cite{BH06,BHRV12,Neider13,Parosh12,TL10,fast,BFLS05,rmc-thesis,Lin12-fsttcs}.

The framework of regular proofs has been extended to deal with almost-sure
termination for weakly-finite probabilistic concurrent programs in \cite{LR16}. 
We briefly summarise the main idea, since we reduce the fair termination
problem to their setting. 
By Proposition \ref{prop:removeProb}, 
the actual probability values do not matter 
in proving almost-sure termination.
For this reason, we may specify a weakly-finite MDP 
$\struct = \transysMDP$ as a regular specification in the same way as we
specify a non-probabilistic transition system in our regular specification
language. Given an MDP $\struct = \transysMDP$, a set $I_0 \subseteq
V_1$ of initial configurations, and a set $F \subseteq V_1$
of final configurations, a regular proof for
$\Prob(s_0 \models F) = 1$ for each $s_0 \in I_0$ is a pair
$\langle \mathit{Inv}, \prec\rangle$ consisting of a regular inductive invariant 
$\mathit{Inv} \subseteq \transysDom$ and a regular relation $\prec\ \subseteq
\transysDom \times \transysDom$ such that:
\begin{enumerate}
\item $I_0 \subseteq \mathit{Inv}$ and $\mathit{post}_{\to}(\mathit{Inv}) \subseteq \mathit{Inv}$.
\item $\prec$ is a strict preorder on $\transysDom$, i.e., it is
    irreflexive ($\forall s \in \transysDom: s \not\prec s$) and transitive
        ($\forall s,s',s'' \in \transysDom: s \prec s' \wedge
        s' \prec s'' \implies s \prec s''$).
\item irrespective of the nondeterministic transitions from any configuration
    in $\mathit{Inv}$, there is a probabilistic transition to a configuration
        in $\mathit{Inv}$ that decreases its rank with respect to $\prec$:
    %The probabilistic player can progress from $\mathit{Inv}$ by following
    %$\prec$:
              \begin{align*}
              &\forall x \in \mathit{Inv}\setminus F, y \in S\setminus F:
              &~  \big(
              (x \to_1 y) ~\Rightarrow~
              (\exists z \in \mathit{Inv}:\;
              (y \to_2 z) \wedge x \succ z)
              \big)~.
              \end{align*}
\end{enumerate}
An automata-theoretic algorithm can then be devised for checking the
above verification conditions with respect to a given regular proof
\cite{LR16}.

\begin{example}{\textbf{[Herman's protocol, continued]}}
We provide a regular encoding of Herman's protocol.
%(for the simplicity of presentation, we consider Herman's protocol on a~linear array). 
    The configurations are words over the alphabet
    $\{\top,\bot,\overline{\top},\overline{\bot}\}$, 
where $\top$
(resp.~$\bot$) signifies that a process holds (resp.~does not hold) a token,
while overlining the character signifies that the process is chosen by the
scheduler. 
We set $\Sigma = \{\top, \bot\}$.
The set $S_0$ of initial configurations is $\ialphabet^* \top \ialphabet^*$, i.e.,
at least one process holds a token. The set of final configurations is
$\bot^* \top \bot^*$, i.e., there is only a~single token in the system.
The actions of the scheduler is to choose 
a process; this can be expressed as the regular expression 
    $I^* (\biword{\top}{\overline{\top}} + \biword{\bot}{\overline{\bot}}) I^*$, where
$I$ denotes the regular language $\biword{\top}{\top} +
\biword{\bot}{\bot}$. 
    \OMIT{
Without loss of generality, we assume that every process is always enabled
(i.e., can be chosen); if not, we simply add an idle transition.
The probabilistic transitions are idle in the case the chosen process does not
hold a token, or, if the chosen process holds a~token, it can either pass 
the token to the right (and merge tokens if the right-hand side neighbour
already holds one) or keep it.
}
The probabilistic actions can be expressed as a union of the following 
    three regular expressions:
    \begin{align}
        I^* (\biword{\overline \top}{\top} + \biword{\overline \bot}{\bot}) I^* &
        && \text{\bfseries (idle)} & \nonumber \\
        I^* \biword{\overline
        \top}{\bot}(\biword{\bot}{\top})+\biword{\top}{\top}) I^*, & 
        \quad 
        (\biword{\bot}{\top}+\biword{\top}{\top})I^*\biword{\overline \top}{\bot})  &&
        \text{\bfseries (pass token right)} \nonumber
    \end{align}
\end{example}

\OMIT{
This automatic procedure for verifying
    ``a regular proof'' gives rise to a fully-automatic verification. In
    the case of safety, such an algorithm works as follows: go 
        through all possible finite automata over $\ialphabet$ in some
        order and check whether 
    it is an inductive invariant that does not intersect with $\mathit{Bad}$. This
    sound procedure is also complete with respect to regular proofs, which
    often suffice in practice. This
    naive algorithm can be substantially improved by applying computational
    learning, e.g., see \cite{Neider13}.
}

%%%%%%%%%%%%%%%%%%%%%%%%%%%%%%%%%%%%%%%%%%%%%%%%%%%%%%%%%%%%%%%%%%%%%%%%%%%%%%%%
\vspace{-3.0mm}
\section{Handling Fairness Requirements} \label{sec:transform}
\vspace{-1.0mm}
%%%%%%%%%%%%%%%%%%%%%%%%%%%%%%%%%%%%%%%%%%%%%%%%%%%%%%%%%%%%%%%%%%%%%%%%%%%%%%%%

% When proving liveness and termination for concurrent systems, it is often 
% important to assume that the schedulers under consideration are \emph{fair}, 
% e.g., every process will be executed infinitely often by the schedulers. 
% It is well known that for many abstract models of distributed
% algorithms, without fairness,
% liveness and termination properties can easily become false. 
% For this reason, unless fairness
% constraints can be handled, a framework for proving liveness and termination
% for concurrent programs is incomplete. 
We now describe the main result of the paper: a general method for embedding
finitary fairness into regular model checking for probabilistic concurrent 
systems.

%*******************************************************************************
\vspace{-2.0mm}
\subsection{Regular Specifications of Fairness}
\vspace{-1.0mm}
%*******************************************************************************

When a complex system or a distributed protocol is being modelled in regular
model checking, it is often necessary to add an \emph{infinite} number of 
fairness
requirements. This is because such a system admits a finite but arbitrary 
number of 
agents or processes, each with its own fairness requirement
(e.g. that the process should be executed infinitely often). For this reason, 
it is not enough to
simply express the fairness requirements as a finite set of pairs of
regular languages (one for the premise, and one for the consequence). 
We describe a regular way of specifying infinitely many fairness 
constraints. Our presentation is a generalisation of the regular specification 
of fairness from~\cite{LTL-MSO,rmc-thesis}.

The general idea %to specify an infinite number of fairness requirements 
is
to define a ``regular function'' $\Tra$ that maps 
a configuration $s = s_1 \cdots s_n \in \transysDom$ to a word 
$w = w_1\cdots w_n$ such that $w_i$ contains:
(1) a bit $b_i$ indicating whether $s$ is in the premise of the $i$-th fairness
requirement, (2) a bit $b_i'$ indicating whether $s$ is in the consequence of 
the $i$-th fairness requirement, and (3) a bit $t$ indicating whether the $i$-th 
fairness requirement is justice or compassion. Such a regular specification
of fairness allows an infinite number of fairness constraints since 
$\transysDom$ is potentially infinite (i.e., containing words of unbounded 
lengths), though only the first $|s|$ fairness requirements matter for a word
$s \in \transysDom$. This is
sufficient for weakly-finite MDPs since the set of reachable configurations 
from any given configuration $s$ is finite and so, among the infinite number of 
fairness constraints, only finitely many are distinguishable. 
The regular
function can be defined by a letter-to-letter transducer with input alphabet 
$\ialphabet$ and output alphabet $\Gamma := \{0,1\} \times \{0,1\} \times
\{0,1\}$. Without loss of
generality, we assume that the $i$-th letter in the output of
every input word of $\Tra$ agree on the third bit (i.e., whether the
fairness requirement is justice or compassion is well-defined): for every
$s,s' \in \transysDom$ and $i \in \N$, if $\Tra(s)[i] = (a,b,c)$ and 
$\Tra(s')[i] = (a',b',c')$, then $c = c'$. Observe this condition on $\Tra$ 
can be algorithmically checked by using a simple automata-theoretic method:
find two accepted words in which in some position their third bits differ.

In this case, $\Tra$ gives rise
to compassion requirements $\SF$ and justice requirements~$\WF$ by associating
the $i$-th position in all output words by a unique fairness constraint. More
precisely,
let
\begin{itemize}
  \item  $A_i = \{ s : \Tra(s)[i] = (1,j,t), \text{ for some $j,t \in \{0,1\}$} \}$ and
  \item  $B_i = \{ s : \Tra(s)[i] = (j,1,t),  \text{ for some $j,t \in \{0,1\}$} \}$.
\end{itemize}
Define:
\begin{itemize}
  \item [(i)]
$\WF = \{ \Diamond\Box A_i \Implies \Box\Diamond B_i: 
        \Tra(s)[i] = (i,j,0), \text{ for some $s \in \transysDom$} 
        \text{ and $j \in \{0,1\}$}\}$,
        \item[(ii)]
$\SF = \{ \Box\Diamond A_i \Implies \Box\Diamond B_i: \Tra(s)[i] = 
(i,j,1), \text{ for some $s \in \transysDom$} \text{ and $j \in 
\{0,1\}$}\}$.
\end{itemize}
Therefore, by Proposition~\ref{prop:removeProb},
our regular fairness specification allows us to define 
weakly-finite fair MDPs $\FMDP$. In the following, we shall call such fair MDPs
\defn{regular}. 
\OMIT{
[As an aside, there are possible generalizations of this idea 
of associating infinitely many fairness requirements to an MDP (see Section 
\ref{sec:regFair} in Appendix), which 
.] 
}

\OMIT{
, but 
\emph{only finitely many once we fix a configuration} $s$. This is sufficient 
for weakly-finite
MDPs, since the set of reachable configurations from any given configuration 
$s$ is finite and so, among the infinite number of fairness constraints,
only finitely many are distinguishable.

Fix a regular transition system $\struct = \transysMDP$ that is derived from
a weakly finite MDP by omitting probability values (cf. Proposition
\ref{prop:removProb}). Suppose that $\transysDom \subseteq \ialphabet^*$ for
some finite alphabet $\ialphabet$. Furthermore, fix a finite number $r$ 
``types'' of fairness
requirements. In this case, a \defn{type} of fairness requirement is simply a 
4-tuple
$(i,b_1,b_2,s)$, where $i \in \{1,\ldots,r\}$ is an index, $b_1, b_2 \in 
\{0,1\}$ are
booleans indicating whether a word is in the premise/consequence of the 
constraint, and $s \in \in \{\SFtype,\WFtype\}$ indicating whether it is
a compassion/justice requirement. 
A regular specification of fairness 
constraints is a regular transducer that annotates each configuration $s \in 
\transysDom$ by 

}
\OMIT{
Suppose that $\transysDom \subseteq \ialphabet^*$ for
some finite alphabet $\ialphabet$. A general, but non-regular, way of 
expressing 
an infinite number of (justice and compassion) fairness requirements 
$\{\varphi\}_{i \in \N}$ is by a computable function that maps each $n \in
\N$ to a pair $(\Aut_n,\AutB_n)$ of finite-state automata over $\ialphabet$
expressing the fairness condition, i.e., $\Diamond\Box \Lang(\Aut_n)
\Implies \Box\Diamond\Lang(\AutB_n)$ in the case of justice, and
$\Box\Diamond \Lang(\Aut_n)
\Implies \Box\Diamond\Lang(\AutB_n)$ in the case of compassion.
}
%In order to make this regular, we simply ``annotate'' a word $w \in \transysDom$
%by the infinite characteristic sequence $\bar i_0,\bar i_0, \ldots$ 
\OMIT{
To make 
We describe a simple way of expressing an unbounded number of fairness
constraints
Furthermore, suppose that $\struct$ comes with $r$ 
``types'' of justice requirements
$\{(p_1,c_1),\ldots,(p_r,c_r)\}$ and $s$ ``types'' of compassion requirements
$\{(p_1',c_1'),\ldots,(p_s',c_s')\}$. Each process could use \emph{at most} one 
fairness requirement from each type. We may assume that each configuration
$s \in \transysDom$
is ``annotated'' with information on whether a specific fairness requirement.
More specifically, since $s$ is a word, we assume that $s$ contains 
letters of the form $(a,b)$, $(\bar a,b)$, $(p,\bar c)$, and
$(\bar p,\bar c)$, for each type $(a,b)$ of fairness (justice or compassion)
requirement. Loosely speaking, the four combination corresponds the four
possibilities of whether $s$ is in the premise or in the consequence of a
specific fairness requirement. \anthony{Give examples} We make additional
assumptions that $\to_1$ and $\to_2$ 
}

Our main theorem is a regularity-preserving
reduction from proving almost sure termination for regular FMDPs (under
finitary fairness) to
proving almost sure termination for regular MDPs (without fairness).
\OMIT{
assuming
realizability of fairness.}
\begin{theorem}
    Let $\struct = \FMDP$ be a~regular representation of an~FMDP, $I_0
    \subseteq V_1$ be a~regular
    set of initial configurations, and
    $F \subseteq V_1$ be a regular set of final configurations.
    Then one can compute 
    a regular representation of MDP $\struct' = \langle \transysDom = V_1' \cup
    V_2'; \leadsto_1,
    \leadsto_2\rangle$ and two regular sets $I_0', F' \subseteq V_1'$
    such that it holds that
        if $\SF$ and $\WF$ are realizable, then
        $\Prob_{\struct'}( I_0' \models \Diamond F' ) = 1$ iff
            $\Prob_{\struct}(I_0 \models \Diamond F ) = 1$.
    \label{th:main}
\end{theorem}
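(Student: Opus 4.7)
The plan is to adapt Alur and Henzinger's abstract program transformation for finitary fairness to the probabilistic setting, and then to show that every step of this transformation can be realized inside regular model checking using only finite automata and letter-to-letter transducers. At the abstract level, I would augment each configuration $s$ of $\struct$ with (i) a ``global bound'' $n$, nondeterministically guessed once at initialization and never modified afterwards; and (ii) one ``alarm'' counter $c_i$ per atomic fairness requirement $(A_i, B_i)$, each initialized to $n$. The dynamics of $\struct'$ mirror those of $\struct$, but additionally: whenever the current state lies in $B_i$, reset $c_i$ to $n$; for a justice requirement we decrement $c_i$ on every step where the state lies in $A_i \setminus B_i$, while for compassion we decrement $c_i$ each time $A_i$ is visited without $B_i$ (as in the definition of $k$-fairness). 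The target set is enlarged to $F' = F \cup \{s' : c_i(s') = 0 \text{ for some } i\}$, so reaching a state with an expired alarm is declared a ``win'' in $\struct'$.

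Next, assuming realizability of $(\SF,\WF)$, I would verify the probabilistic equivalence by a two-direction argument. For $(\Leftarrow)$: any fin-fair scheduler $\sigma$ in $\struct$ with bound $k$ lifts to a scheduler $\sigma'$ in $\struct'$ that initially chooses $n := k$; by definition of $k$-fairness no counter $c_i$ ever reaches $0$ along any path in $\struct'_{\sigma'}$, so runs that reach $F'$ are exactly runs that reach $F$, and almost-sure reachability transfers. For $(\Rightarrow)$: given an arbitrary scheduler $\sigma'$ in $\struct'$ with initial guess $n=k$, project it to $\sigma$ in $\struct$; either a counter reaches $0$ almost surely (and $F'$ is hit a.s.), or with positive probability all counters stay positive on a measurable set of runs, conditioned on which $\sigma$ is a $k$-fair scheduler and thus reaches $F$ a.s.\ by hypothesis. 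Combining the two cases yields $\Prob_{\struct'}(I_0' \models \Diamond F') = 1$. Weak finiteness of $\struct$ and Proposition~\ref{prop:removeProb} justify that actual probability values of $\struct'$ are immaterial.

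The main obstacle is the third step: realizing the transformation symbolically in regular model checking, even though the counters are a priori unbounded. The key is to exploit weak finiteness: for every fixed initial configuration, only finitely many counter values arise, so each counter can be stored \emph{positionally within the word itself}. Concretely, I would widen the alphabet to $\ialphabet \times \Gamma$ with $\Gamma$ a small finite alphabet of ``pebble'' symbols, and encode $n$ together with each $c_i$ as blocks of dedicated cells attached to the corresponding process, with the cell whose ``pebble'' is marked indicating the current counter value. The initial language $I_0'$ simply requires each $c_i$-block to have its pebble at the top position (value $n$), and $F'$ is the regular union of the lifted $F$ with configurations having a pebble at the bottom of some block. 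Decrement and reset amount to locally shifting a pebble one cell within a fixed block, which is implementable by a length-preserving finite transducer; the original transitions $\to_1, \to_2$ lift directly to $\leadsto_1, \leadsto_2$ by leaving the $\Gamma$-tracks untouched except for the decrement/reset actions determined by evaluating the regular predicates $A_i, B_i$ defined by the transducer $\Tra$ specifying the fairness constraints. The bulk of the technical work is to check that $A_i$- and $B_i$-membership tests can be synchronized with counter updates in a single letter-to-letter transducer; this follows from the fact that $\Tra$ is itself letter-to-letter and can be run in parallel with the updates.
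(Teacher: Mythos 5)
Your transformation, your soundness direction, and your regular (unary ``pebble'') encoding all match the paper's construction essentially step for step, so most of the proposal is on target. The problem is the completeness direction, i.e.\ showing $\Prob_{\struct}(I_0 \models \Diamond F) = 1$ implies $\Prob_{\struct'}(I_0' \models \Diamond F') = 1$. You take an arbitrary scheduler $\sigma'$ of $\struct'$, project it to $\sigma$ on $\struct$, and then argue that ``conditioned on'' the event that all counters stay positive, $\sigma$ is $k$-fair and the hypothesis applies. This step does not go through as stated: $k$-fairness is a property of the scheduler (no path of length $k$ in $\struct_\sigma$ violates a requirement), not of a sub-event of its runs, and the projected $\sigma$ is in general \emph{not} fin-fair, so the hypothesis---which quantifies only over fin-fair schedulers---says nothing about it. Worse, the conditional measure obtained by restricting $\Prob_{\struct_\sigma}$ to the runs on which no counter expires is generally not the measure induced by \emph{any} scheduler: conditioning on that event can reweight the probabilistic branches (those controlled by $\to_2$, which no scheduler can influence), so you cannot re-import it as ``some fair scheduler reaching $F$ a.s.''

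The paper closes exactly this gap by \emph{repairing} the projected scheduler rather than conditioning: $\sigma$ is defined to mimic $\sigma'$ until the first moment an alarm fires, and from then on to cycle round-robin through the (finitely many, by weak finiteness) consequence sets $X_1,\dots,X_m$. The resulting $\sigma$ is genuinely $K$-fair for $K = 2m$ on \emph{all} runs, so the hypothesis gives $\Prob_{\struct_\sigma}(s_0 \models \Diamond F)=1$. One then compares the two induced tree-shaped DTMCs: they are isomorphic except on the subtrees rooted at paths that reach an alarm configuration, and on those subtrees $\struct'_{\sigma'}$ reaches $F'$ with conditional probability $1$ (the alarm states are themselves in $F'$) while $\struct_\sigma$ reaches $F$ with conditional probability at most $1$; hence $\Prob_{\struct_{\sigma'}'}(\Diamond F') \geq \Prob_{\struct_\sigma}(\Diamond F) = 1$. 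You should also state and prove, as the paper does in a separate lemma, that $\struct'$ remains weakly finite (identical counters for equivalent requirements stay identical, and all values lie in $\{0,\dots,f(n)\}$); you gesture at this for the encoding but it is also what licenses invoking Proposition~\ref{prop:removeProb} and the finiteness of $m$ in the round-robin repair.
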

% In the rest of this section, we will prove this theorem by describing the
% reduction. In the next section we will provide a framework of regular
% proofs for checking realizability of fairness. Together, they form a complete
% framework for proving almost sure fair termination for MDPs.

%*******************************************************************************
\vspace{-2.0mm}
\subsection{Abstract Program Transformation}
\vspace{-1.0mm}
%*******************************************************************************
Before proving Theorem \ref{th:main}, let us first recall an abstract 
program transformation \emph{\`{a} la} Alur \& Henzinger \cite{AH98}, which 
encodes finitary fairness into a 
program using integer counter variables. Intuitively, we reserve one variable
for each atomic fairness condition
%, and one global variable $n$ that records 
as an ``alarm clock'' that will set off if its corresponding process has not 
been executed for a long time, and one global variable $n$ that acts as a
\emph{default} value to reset a clock to %arbitrary time in the future 
as soon as the corresponding process is executed. Although Alur \& Henzinger 
\cite{AH98} did not discuss about probabilistic programs, their transformation
can be easily adapted to the setting of MDPs, though correctness still has to be
proven.
%Below, we will state the version of the transformation that for MDPs.

%Let us assume that a $\struct$-fair scheduler is realizable and so 
%$\Prob_{\struct}(S_0\models F)$ is well-defined.
\OMIT{
To determine whether $\Prob_{\struct}(S_0 \models F) = 1$,
%The transformation goes as follows. 
we will construct an MDP $\struct' = 
\langle \transysDom = V_1' \cup V_2'; \leadsto_1, \leadsto_2\rangle$ with a 
probability distribution $\delta'$ and two sets $S_0', F' \subseteq V_1'$
satisfying the following property: 
    \[
        \Prob_{\struct}(S_0 \models F) = \Prob_{\struct'}(S_0' \models F'),
    \] 
Hence, determining the value of $\Prob_{\struct'}(S_0' \models F')$ allows
us to determine the value of $\Prob_{\struct}(S_0 \models F)$.
}

We now elaborate on the details of the transformation. Given an FMDP 
$\struct = \FMDP$ with a probability distribution $\delta$, %and two sets $S_0, F \subseteq V_1$.
the transformation will produce an MDP $\struct' = \langle \transysDom = V_1' 
\cup V_2'; \leadsto_1, \leadsto_2\rangle$ with a probability distribution 
$\delta'$ as follows. %and two sets $S_0', F' \subseteq V_1'$.
Introduce a set $\Var$ of
``counter'' variables that range over natural numbers: $x_j$ (for each $j \in 
\WF$), $y_c$ (for each $c \in \SF$), and $n$. Let $\funcClass$ be
the set of all valuations $f$ mapping each variable in 
$\Var$ to a natural number
such that $f(x_j), f(y_c) \leq f(n)$ for each $j \in \WF$ and $c \in \SF$.
We define $V_1' = V_1 \times \funcClass$ and $V_2' = V_2 \times \funcClass$. 
\OMIT{
The transition relation
$\leadsto_2$ is simply a copy of $\to_2$ with the valuation component untouched,
i.e., $(s,f) \leadsto_2 (s',f')$ iff $s \to_2 s'$ and $f = f'$. 
We define 
the probability distribution 
    $\delta'((s,f),(s',f')) = \delta(s,s')$ if $f= f'$ and
    $\delta'((s,f),(s',f')) = 0$ otherwise.
}
We now define the transition relation $\leadsto_i$ such that $(s,f) \leadsto_i (s',f')$ if
$s \to_i s'$ and 
\begin{itemize}
    \item for each $z \in \Var$, $f(z) > 0$,
    \item $f'(n) := f(n)$,
    \item $x_j$ (for $j = (A,B) \in \WF$) and $y_c$ (for $c = (A,B) \in \SF$) change as follows:
    \begin{align*}
    f'(x_j) =& \left\{ \begin{array}{cl}
            f(x_j) - 1\!\!\!\! & \text{\quad if $s \in A \cap \overline{B}$} \\
                  f(n) & \text{\quad if $s \in \overline{A} \cup B$}
                              \end{array}
                    \right.
                    &&&
        f'(y_c) =& \left\{ \begin{array}{cl}
                  f(n) & \text{\quad if $s \in \overline{A} \cap \overline{B}$} \\
                  f(y_c) - 1\!\!\!\! & \text{\quad if $s \in A \cap \overline{B}$} \\
                  f(n) & \text{\quad if $s \in B$}
                              \end{array}
                    \right.
    \end{align*}
    % \item \emph{Changes to $x_j$ (for $j = (A,B) \in \WF$)}:
    %     \[f'(x_j) := \left\{ \begin{array}{cc}
    %         f(x_j) - 1 & \text{\quad, if $s \in A \cap \overline{B}$} \\
    %               f(n) & \text{\quad, if $s \in \overline{A} \cup B$}
    %                           \end{array}
    %                 \right.
    %      \]
    %  \item \emph{Changes to $y_c$ (for $c = (A,B) \in \SF$)}: 
    %     \[f'(y_c) := \left\{ \begin{array}{cc}
    %               f(n) & \text{\quad, if $s \in \overline{A} \cap \overline{B}$} \\
    %               f(y_c) - 1 & \text{\quad, if $s \in A \cap \overline{B}$} \\
    %               f(n) & \text{\quad, if $s \in B$}
    %                           \end{array}
    %                 \right.
    %      \]
\end{itemize}
($\overline{A}$ denotes the set-complement of $A$).
Finally, 
we define the probability distribution $\delta'$ underlying 
$\leadsto_2$ as $\delta'((s,f), (s',f')) = \delta(s,s')$ whenever $s\in V_2$. 

% \[
%     \delta'((s,f),(s',f')) = \left\{ \begin{array}{cl}
%         \delta(s,s') & \text{\quad if $s \in V_2$} \\
%         0            & \text{\quad if $s \in V_1$.} 
%                                     \end{array}
%                                     \right.
% \]

Intuitively, the variables %(which can be taken unless one of the alarm 
%clocks has set off) 
$x_j$'s and $y_c$'s keep track of how long the scheduler has delayed
choosing an enabled process, while the variable $n$ (unchanged once the initial
configuration of the MDP is fixed) aims to ensure that the scheduler is
$n$-fair. Since $n$ is a~variable (not a constant), the resulting MDP $\struct'$
captures precisely the behaviour of $\struct$ under fin-fair schedulers.
% Before stating this correctness claim, we state a lemma that this program 
% transformation preserves weak finiteness. 

\begin{lemma}
    If $\struct$ is a weakly-finite FMDP, then $\struct'$ is weakly-finite. 
    \label{lm:weak-finite-preserve}
\end{lemma}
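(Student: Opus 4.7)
The plan is to show that from any fixed start configuration $(s_0, f_0) \in V_1' \cup V_2'$, only finitely many configurations are reachable in $\struct'$. I decompose the argument by bounding each component of a reachable pair $(s,f)$ separately.

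First, any path $(s_0, f_0) \leadsto (s_1, f_1) \leadsto \cdots$ in $\struct'$ projects onto a path $s_0 \to s_1 \to \cdots$ in $\struct$, since by construction $(s,f) \leadsto_i (s',f')$ requires $s \to_i s'$. As $\struct$ is weakly-finite, the set $R \subseteq \transysDom$ of $s$-values reachable from $s_0$ is finite; set $L = \max \{ |s| : s \in R\}$, which is also finite.

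Next, I bound the valuations. The update rule $f'(n) = f(n)$ makes $N := f_0(n)$ invariant along every reachable configuration, so $f(n) = N$ throughout. The constraint $f(x_j), f(y_c) \leq f(n) = N$ imposed by $\funcClass$ is preserved by the updates, because each counter is either decremented while remaining non-negative, or reset to $N$. Therefore every counter takes values in the finite set $\{0, 1, \ldots, N\}$.

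The main obstacle is that $\Var$ may be countably infinite: in the regular fairness specification one has one counter per position of the word representation, so the per-counter bound alone does not give finiteness of the set of relevant valuations. To resolve this, observe that whenever a fairness constraint $(A,B)$ satisfies $A \cap R = \emptyset$, for every $s \in R$ we have $s \in \overline{A}$, and both the justice rule (which resets on $\overline{A} \cup B$) and the compassion rule (which resets on $\overline{A} \cap \overline{B}$ or on $B$) then fire the reset branch, pinning the counter to $N$ after one step. In the regular specification, any fairness constraint indexed by a position $i > L$ has $A_i$ disjoint from $R$ (since no $s \in R$ has an $i$-th letter), so all such counters are forced to $N$ after a single transition. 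Hence, apart from the initial valuation $f_0$, every reachable valuation differs from the ``all-$N$'' valuation on at most $2L$ coordinates, each taking values in $\{0, \ldots, N\}$. This yields at most $1 + (N+1)^{2L}$ reachable valuations, and combined with $|R| < \infty$, finitely many reachable configurations in $\struct'$.
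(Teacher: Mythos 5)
Your bounding of the state component and of the individual counter values (the invariance of $f(n)=N$ and the range $\{0,\dots,N\}$) matches the paper, and you correctly identify the real difficulty: $\Var$ may be countably infinite, so a per-counter bound does not by itself bound the number of reachable valuations. However, your resolution of that difficulty does not prove the lemma as stated. The lemma is formulated for an arbitrary weakly-finite FMDP $\struct = \FMDP$, where $\WF$ and $\SF$ are arbitrary (at most countable) families of pairs of subsets of $\transysDom$; nothing in the hypotheses forces all but finitely many premises $A$ to be disjoint from the reachable set $R$. Your argument imports that property from the \emph{regular} fairness specification (constraints indexed by word positions $i>L$), so it establishes the lemma only for that particular instantiation, not in the generality in which the abstract transformation is defined. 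Concretely, take countably many justice constraints $(A_i,B_i)$ with $A_i=\transysDom$ for all $i$ and varying $B_i$: none of their counters is ever ``pinned to $N$ after one step'' by your case analysis, and your count $1+(N+1)^{2L}$ no longer applies.

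The missing idea --- which is exactly what the paper's proof supplies --- is to quotient the possibly infinite family of constraints by their traces on the finite reachable set $X=\mathit{post}_{\to^*}(s_0)$: declare $(A,B)$ and $(A',B')$ equivalent when $A\cap X=A'\cap X$, $B\cap X=B'\cap X$, and both are of the same type (justice or compassion). This relation has finite index, and since the counter update at a configuration $(s,f)$ with $s\in X$ depends only on whether $s$ lies in $A$ and in $B$, two equivalent constraints whose counters start at the same value keep identical values along every run. Hence every reachable valuation factors through the finite set of pairs (equivalence class, initial value), each mapped into $\{0,\dots,N\}$, which yields finitely many reachable valuations even when infinitely many premises meet $R$. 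Your observation about constraints with $A\cap R=\emptyset$ is the special case of one such equivalence class, all of whose counters are pinned to $N$; to close the gap you need this general collapsing argument (or you must weaken the lemma so that it speaks only of regular FMDPs).
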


\begin{proof}
Since $\struct$ is weakly-finite, once a configuration $(s,f)$ of 
$\struct'$ is chosen, there are only finitely many different valuation 
$s' \in \transysDom$ such that $(s',f')$ (for some $f' \in \funcClass$)
is reachable from $(s,f)$.
In the following, we show that there are also only
finitely many different valuations $f' \in \funcClass$ such that $(s',f')$ 
(for some $s' \in \transysDom$) is reachable from $(s,f)$.
Let $X$ be the (finite) set $X = \mathit{post}_{\to^*}(s)$.
Define two
equivalence relations $\sim_{(s,f),c}$ and $\sim_{(s,f),j}$ on the set
$2^{\transysDom} \times 2^{\transysDom}$ of pairs of subsets of $\transysDom$
as follows: 
\begin{itemize}
    \item $(A,B) \sim_{(s,f),c} (A',B')$ iff (a) $A \cap X = A' \cap X$ and
        $B \cap X = B' \cap X$, and (b)~$(A,B) \in \SF$ iff $(A',B') \in
        \SF$.
    \item $(A,B) \sim_{(s,f),j} (A',B')$ iff (a) $A \cap X = A' \cap X$ and
        $B \cap X = B' \cap X$, and (b)~$(A,B) \in \WF$ iff $(A',B') \in \WF$.
\end{itemize}
Observe that, since $X$ is finite, both equivalence relations are of finite 
index (i.e.~have only finitely many equivalence classes).
This implies that we need not distinguish two variables in $\Var\setminus\{n\}$
if they are both for the justice or both for the compassion requirements, in the same equivalence class in the
appropriate relation $\sim_{(s,f),c}$ or $\sim_{(s,f),j}$, and they both have
the \emph{same} initial $f$-values. To see this, let $\leadsto {} :=  {}
\leadsto_1 \cup \leadsto_2$.
Observe that,
%both $\leadsto_1$ and $\leadsto_2$ preserve the following invariants on
for each $(s',f') \in post_{\leadsto^*}((s,f))$ and $c = (A,B), c' = (A',B') \in
\SF$, it is the case that $f'(y_c) = f'(y_{c'})$ iff $f(y_c) = f(y_{c'})$.
Similarly,
%both $\leadsto_1$ and $\leadsto_2$ preserve the following invariants on
for each $(s',f') \in post_{\leadsto^*}((s,f))$ and $j = (A,B), j' = (A',B') 
\in \WF$, it is the case that $f'(x_j) = f'(x_{j'})$ iff $f(x_j) = f(x_{j'})$.
In other words, identical counter values across similar fairness constraints
remain identical under an application of $\leadsto$. Since all counter values
in all reachable configurations $(s',f') \in post_{\leadsto^*}((s,f))$
are in $\{0,\ldots,f(n)\}$, it immediately follows that
$post_{\leadsto^*}((s,f))$ is finite.
\end{proof}

%We first provide the
%intuition. 
%As in the probabilistic transitions, the transitions of 
%$\leadsto_1$ will follow $\to_1$; the difference is that they will also change
%the valuation component $f$, assuming that \emph{each} variable $z \in \Var$ is 
%assigned to a value $f(z)$ that is non-zero (otherwise, the configuration has
%no outgoing $\leadsto_1$-transition):
\OMIT{
\begin{enumerate}
\item \emph{Changes to $x_j$ (for $j \in \WF$)}. A transition in 
    $\leadsto_1$ --- say, from the configuration $(s,f)$ --- decrements the 
    value of the variable $x_j$ by 1 whenever $s$ is in the premise of $j$ but 
    \emph{not} in the consequence of $j$. Otherwise, if $s$ is not in the
        premise of $j$ or $s$ is in the consequence of $j$, increase $x_j$ to
        any arbitrary integer value $n \in \N$ (nondeterministically chosen).
\item \emph{Changes to $y_c$ (for $c \in \SF$)}. A transition in
    $\leadsto_1$ --- say, from the configuration $(s,f)$ --- decrements the
    value of $y_c$ by 1 whenever $s$ is in the premise of $c$ but not in the
    consequence of $c$. Otherwise, if $s$ is neither in the premise nor in the
    consequence of $c$, then the value of $y_c$ is unchanged. Finally, if
    $s$ is in the consequence of $c$, increase $y_c$ to an arbitrary
    integer value $n \in \N$ (nondeterministically chosen).
\item $f'(n) := f(n)$.
\end{enumerate}
It can be proved from first principles that, whenever fairness is
realizable for $\struct$, it is the case that
$\Prob_{\struct}(S_0 \models F) = \Prob_{\struct'}(S_0' \models F')$,
where:
\begin{itemize}
        \item $S_0' = S_0 \times \funcClass$, and
        \item $F' = (F \times \funcClass) \cup (\transysDom \times
            \funcClass_0)$,
            where $\funcClass_0$ contains all $f \in \funcClass$ such that
            $f(x_j) = 0$ for some $j \in \WF$ or $f(y_c) = 0$ for some $c
            \in \SF$ (i.e. one of the alarms has been triggered).
\end{itemize}
}

\OMIT{
If we were to implement this program transformation in the
regular model checking framework, then $\struct$, $S_0'$, and $F'$ should be 
representable by regular languages and regular relations.
Unfortunately, this is difficult to achieve. 
Firstly, since the counters
can be reset to an arbitrarily large value, our transformation yields MDPs 
$\struct'$ that 
are not weakly-finite even if the original FMDP $\struct$ is weakly finite.
Reasoning about almost-sure termination in infinite DTMCs that are 
\emph{not} weakly-finite in general depends on the probability values of
the transitions (e.g. see the well-known Gambler's Ruin example
\cite{probability-textbook}).
For these reasons, $\struct'$ no longer fit into our current regular model 
checking framework.
%since Proposition \ref{prop:removeProb} no longer applies,
%i.e., almost-sure termination might now be sensitive to the actual transition 
%probabilities. 
Secondly and more importantly, it is unclear how to embed an \emph{infinite} 
number of counters to our regular model checking framework.
We will deal with these two problems in turn.
}

\OMIT{
\subsection{Finitary fairness to the rescue}
We will modify the above program transformation so that the resulting
MDP $\struct'$ is always weakly-finite if $\struct$ is weakly finite. 
The crux of the transformation is as follows.
We add a new variable $n$ in $\Var$ that ranges over the set $\N$
of natural numbers. This new variable $n$ acts as the \emph{default} value,
to which each alarm clock is to be reset when the corresponding process is
executed. Intuitively, this transformation encodes \emph{finitary fairness}, 
i.e., a restricted form of 
fairness \cite{AH98} wherein, once an initial configuration $s$ is chosen, 
there exists 
a value $k \in \N$ such that each fairness constraint is satisfied at least once
every $k$ transitions. 
% As we will see below, although this transformation does 
% \emph{not} preserve fair termination for the general class of weakly-finite 
% fair MDPs, it turns out that it \emph{does} preserve fair termination for a
% large class of fair MDPs that arise in practice, i.e., whose fairness 
% constraints are \emph{process
% constraints}, which asserts that ``every process is chosen infinitely often''.

Formally, we first redefine $\funcClass$ to be
the set of all functions (i.e. \emph{valuations}) $f$ mapping each variable in 
to a natural number
such that $f(x_j), f(y_c) \leq f(n)$ for each $j \in \WF$ and $c \in \SF$.
The definition of $\leadsto_2$ is the same as above. We only need to slightly
modify the definition of $\leadsto_1$ as follows: $(s,f) \leadsto_1 (s',f')$
if (1) $s \to_1 s'$, (2) $f(z) > 0$ for each $z \in \Var$, 
(3) for $j \in \WF$, we have $f'(x_j) = 
f(x_j) - 1$ if $s$ is in the premise of $j$ but not in the consequence of $j$; 
otherwise, $f'(x_j) = n$, and (4) for $c \in \SF$, if $s$ is in the premise of 
$c$ but not in the consequence of $c$, then $f'(y_c) := f(y_c) - 1$; else, if
$s$ is neither in the premise nor in the consequence of $c$, then $f'(y_c) =
f(y_c)$; otherwise, set $f'(y_c) := n$. With the same definitions of
$S_0'$ and $F'$ above, we have the following lemma.
}

\OMIT{
The above reduction preserves fair termination in the case of process
fairness, but \emph{not} for general non-process fairness requirements (see 
Section \ref{sec:cex} in Appendix).
Furthermore, process fairness implies realizability of fair schedulers 
\emph{for free} since, for each
initial configuration $s_0 \in S_0$ and given sufficiently large $n$, Scheduler
could satisfy all chosen sets by a simple round-robin strategy (i.e. given
chosen sets $X_1,\ldots,X_m$, which we may assume to be finite by Proof of
Lemma \ref{lm:weak-finite-preserve}, Scheduler could choose $X_1$, $X_2$,
etc. in this order). 
}
% We now show that our new program transformation preserves 
% fair termination for MDPs with only process fairness constraints.
We next state a correctness lemma for the transformation.
To this end,
given a set $S_0 \subseteq \transysDom$ of initial configurations in $\struct$, 
we define:
\begin{itemize}
    \item $S_0' := S_0 \times \funcClass_=$, where $\funcClass_=$ contains
        functions $f \in \funcClass$ such that $f(x_j) = f(y_c) = f(n)$
        for each $j \in \WF$ and $c \in \SF$.
    \item $F' = (F \times \funcClass_{> 0}) \cup (\transysDom \times 
            \funcClass_0)$,
            where $\funcClass_0$ contains all $f \in \funcClass$ such that
            $f(x_j) = 0$ for some $j \in \WF$ or $f(y_c) = 0$ for some $c
            \in \SF$ (i.e. one of the alarms has been triggered), and
            $\funcClass_{>0} := \funcClass \setminus \funcClass_0$.
\end{itemize}
\begin{lemma}[Correctness]
    If $\struct$ is a~weakly-finite FMDP, %with process fairness 
    %constraints, 
    it is the case that
        $$\Prob_{\struct}(S_0 \models \Diamond F) = 
        \Prob_{\struct'}(S_0' \models \Diamond F').$$
    \label{lm:fair-for-wfmdp}
\end{lemma}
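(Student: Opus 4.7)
The plan is to establish the equality by proving both inequalities, leveraging a deterministic bijection between runs of the two systems. Fix any $s_0 \in S_0$ and $k \in \N$, and let $f_k \in \funcClass_=$ denote the initial valuation with all counters equal to $k$. Because the counter updates in $\struct'$ are a deterministic function of the $\struct$-transition being taken and the current valuation, every finite run of $\struct$ from $s_0$ lifts uniquely to a run of $\struct'$ from $(s_0, f_k)$, and the lifting is a bijection on cylinders that preserves transition probabilities. Along a pair of corresponding runs, the key invariants are: (a) no counter $x_j$ or $y_c$ ever reaches $0$ in $\struct'$ iff the $\struct$-run is $k$-fair; (b) reaching $F \times \funcClass_{>0}$ in $\struct'$ matches reaching $F$ in $\struct$ along an alarm-safe prefix; (c) reaching the ``alarm'' component $\transysDom \times \funcClass_0$ of $F'$ matches the first violation of $k$-fairness in $\struct$.

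For the direction $\Prob_{\struct'}(S_0' \models \Diamond F') \leq \Prob_{\struct}(S_0 \models \Diamond F)$, I would take an arbitrary $k$-fair scheduler $\sigma$ on $\struct$ from $s_0 \in S_0$ and lift it to a scheduler $\sigma'$ on $\struct'$ from $(s_0, f_k) \in S_0'$ that applies $\sigma$ to the $\struct$-projection of the current history; this is well defined since the $\struct$-action and the counter valuation uniquely determine the next $\struct'$-state. Because $\sigma$ is $k$-fair, no run triggers an alarm under $\sigma'$, so by (a)--(c) the bijection gives $\Prob_{\struct'_{\sigma'}}((s_0, f_k) \models \Diamond F') = \Prob_{\struct_\sigma}(s_0 \models \Diamond F)$, and infimising over $s_0$, $k$, and $k$-fair $\sigma$ yields the inequality.

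For the converse $\Prob_{\struct'}(S_0' \models \Diamond F') \geq \Prob_{\struct}(S_0 \models \Diamond F)$, take any scheduler $\sigma'$ on $\struct'$ from $(s_0, f_k)$ and let $\sigma$ be its projection to $\struct$; this is well defined because the counter part of any $\struct'$-history is uniquely determined by its $\struct$-projection and $k$. The projected $\sigma$ need not be fin-fair. I would define $\tilde\sigma$ to coincide with $\sigma$ on every history whose simulated counters remain positive, and to switch to a fixed fin-fair ``recovery'' scheduler $\sigma^*$ (whose existence is guaranteed by fin-realizability of $(\SF,\WF)$) the first time a counter would hit $0$. By weak finiteness of $\struct$ the reachable state space from $s_0$ is finite, so $\sigma^*$ can be chosen $K$-fair for some $K = K(s_0)$; hence $\tilde\sigma$ is $(k+K)$-fair, and in particular fin-fair. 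Splitting the runs of $\struct_{\tilde\sigma}$ into those that never engage recovery (which by the bijection coincide with alarm-safe runs of $\struct'_{\sigma'}$ and satisfy $\Diamond F \Leftrightarrow \Diamond F'$) and those that do (which correspond to $\struct'_{\sigma'}$-runs already in $F'$ via the alarm, and contribute at most their own mass toward $\Diamond F$) yields $\Prob_{\struct_{\tilde\sigma}}(s_0 \models \Diamond F) \leq \Prob_{\struct'_{\sigma'}}((s_0, f_k) \models \Diamond F')$, from which infimising gives the required inequality.

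The delicate step is the second direction: one must verify that $\tilde\sigma$ really is fin-fair globally, rather than merely avoiding the alarm on each individual path, which rests crucially on weak finiteness to extract a single uniform bound $K$, and on fin-realizability to obtain the recovery scheduler $\sigma^*$ at all. A further technical point is the measurability of the switching event ``the first time a counter reaches $0$'' and of the two path classes, which is needed to make the probability decomposition rigorous.
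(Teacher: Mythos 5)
Your proof follows essentially the same route as the paper's: for one inequality you lift a $k$-fair scheduler of $\struct$ to $\struct'$ and observe the induced DTMCs are isomorphic because no alarm can fire; for the other you project an arbitrary $\struct'$-scheduler back to $\struct$, patch it after the first alarm with a uniformly bounded fair continuation so the result is fin-fair, and compare the two induced DTMCs, which agree except on the alarm subtrees where the $\struct'$ side reaches $F'$ with conditional probability $1$. The only cosmetic difference is that the paper instantiates your recovery scheduler $\sigma^*$ concretely as a round-robin over the finitely many consequence sets (justified by the standing assumption that every process is always enabled, and yielding the bound $K=2m$), rather than appealing abstractly to fin-realizability.
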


\begin{proof}
In this proof, we make use of the following notation.
For a sequence $\pi$ of pairs $(x_1,y_1), (x_2,y_2),\ldots$,
we use the notation $\proj{1}(\pi)$ (resp. $\proj{2}(\pi)$) to denote the 
sequence $\pi$ projected to the first (resp. second) arguments, i.e.,
$x_1,x_2,\ldots$ (resp. $y_1,y_2,\ldots$).
Moreover,
for each $k > 0$, we define $\funcClass_k$ to be the set of all
functions $f \in \funcClass_=$ such that $f(n) = k$.

We first prove that $\Prob_{\struct}(S_0 \models \Diamond \finals) \geq
\Prob_{\struct'}(S_0' \models \Diamond \finals')$, i.e., the transformation does
not increase the probability of reaching final states.
For each $k \in \N$, 
consider a $k$-fair scheduler $\sigma$ for $\struct$.
% using the above monotonicity property,
It suffices to prove that given any $s_0 \in S_0$,
$\Prob_{\struct_\sigma}(s_0 \models \Diamond \finals) = 
\Prob_{\struct_{\sigma'}'}((s_0,f) \models \Diamond \finals')$, where $f \in 
\funcClass_k$ and $\sigma'(\pi) := \sigma(\proj{1}(\pi))$ (note that
$\funcClass_k$ contains exactly one $f$ compatible with $s_0$).
In turn, to prove this, it suffices to prove that the DTMC $\struct_\sigma$
restricted to configurations reachable from $s_0$ is isomorphic to
$\struct_{\sigma'}'$ restricted to configurations reachable from $(s_0,f)$.
This can be seen from the
fact that configurations of the form $(S\setminus \finals) \times
\funcClass_0$ are not reachable from $(s_0,f)$; if they were reachable, since
the counter encoding precisely emulates the definition of finitary fairness
\cite{AH98},
the witnessing path $\pi$ would give rise to a path $\proj{1}(\pi)$ that would witness
that $\sigma$ is not $k$-fair, contradicting our original assumption.

We next prove that $\Prob_{\struct}(S_0 \models \Diamond \finals) \leq
\Prob_{\struct'}(S_0' \models \Diamond \finals')$, i.e., our transformation does
not decrease the probability of reaching final states.
Consider any $(s_0,f) \in 
S_0'$ and any scheduler $\sigma'$ on $\struct'$. %Let $K = 2k$ and 
Consider the scheduler $\sigma$ on $\struct$ that simulates the behaviour of 
$\sigma'$, but as soon as one of the alarm clocks has set off the scheduler 
goes through all consequence sets (say, $X_1,\ldots, X_m$ for some $m \in \N$;
the sequence is finite since $\struct$ and $\struct'$ are weakly finite) in some order
and chooses actions that satisfy them in a round robin manner (which can be done
since we consider process constraints). More precisely, for each path $\pi$
in $\struct_{\sigma'}'$, define $\sigma(\proj{1}(\pi)) := \sigma'(\pi)$. 
%In other words, this means that 
For
each path $\pi$ ending in a configuration in $(S \setminus \finals) \times
\funcClass_0$, the action of the scheduler on any path with $\pi$ as a prefix is
to loop through $X_1,\ldots,X_m$ and pick actions that satisfy them.
Therefore, the scheduler $\sigma$ is $K$-fair for $K := 2m$.
Furthermore,
% we can see that 
% the tree-shaped DTMC $\struct_1$ obtained by restricting $\struct_\sigma$ to
% configurations that are reachable from $s_0$ is identical to the tree-shaped 
% DTMC $\struct_2$ obtained by restricting $\struct_{\sigma'}'$ to configurations
% that are reachable from $(s_0,f)$, except for subtrees $\run_{\pi}$
consider the two tree-shaped DTMCs $\struct_1$ and $\struct_2$, where
$\struct_1$ is obtained from $\struct_\sigma$
by restricting the sets of configurations to those that are reachable from
$s_0$, and $\struct_2$ is obtained from $\struct_{\sigma'}$ by restricting the
sets of configurations to those that are reachable from $(s_0,f)$.
$\struct_1$ and $\struct_2$ are isomorphic except for subtrees
$\run_\pi$ where $\pi$ is a path in~$\struct_2$ from $(s,f)$
ending in $(S\setminus\finals) \times \funcClass_0$ without visiting a
configuration in $\finals \times \funcClass_{>0}$.
% from
% $(s,f')$ satisfying the formula $\Box \neg (\finals \times (\funcClass_{>0}
% \setminus \funcClass_0)) \wedge \Diamond ((\transysDom \setminus \finals)
% \times \funcClass_0)$, i.e., in the path $\pi$, the clock sets off (i.e. hits a
% configuration in $\finals'$) without ever visiting an actual final state from
% $\finals$.
The probability $p$ of visiting a configuration in $\finals'$ in $\struct_2$ from
$(s,f')$ on the condition that $\pi$ is taken is 1.
Thus, %$\proj{1}(pi)$ 
%corresponding path in $\struct_1$ (same as
%$\pi$ but the clock values differ by precisely $f(n) - f'(n)$), then
on the condition that the prefix $\proj{1}(\pi)$ is taken, the probability of 
visiting a configuration in $\finals'$ in $\struct_1$ cannot exceed $p$.
This shows us that $\Prob_{\struct_\sigma}(s_0 \models \Diamond \finals)
\leq \Prob_{\struct_{\sigma'}'}((s_0,f) \models \Diamond \finals')$.
Consequently, since the choice of $(s_0,f) \in S_0'$ and $\sigma'$ was
arbitrary, we can conclude that $\Prob_{\struct}(S_0 \models \Diamond \finals) \leq
\Prob_{\struct'}(S_0' \models \Diamond \finals')$. 
\end{proof}

\noindent
These two lemmas immediately imply Theorem \ref{th:main}.
   %  where:
   %  \begin{itemize}
   %      \item $S_0' = S_0 \times \funcClass$, and
   %      \item $F' = (F \times \funcClass) \cup (\transysDom \times
   %          \funcClass_0)$,
   %          where $\funcClass_0$ contains all $f \in \funcClass$ such that
   %          $f(x_j) = 0$ for some $j \in \WF$ or $f(y_c) = 0$ for some $c
   %          \in \SF$.
   %  \end{itemize}

% Say about Frances transformation first, and then the one that works
% for weakly finite programs.
\OMIT{
To prove Theorem \ref{th:main}, we first provide an \emph{abstract} program 
transformation that converts a given FMDP to an MDP (without fairness) while
preserving almost-sure termination assuming realizability of fairness. 
Completeness holds for general weakly-finite MDPs.
%We will show completeness for quantitative/qualitative reachability
%(in particular, fair termination) over weakly-finite abstract programs.
In Section \ref{sec:fairRegular}, we will concretise the abstract program 
transformation in
the regular model checking framework and show how it can be done
algorithmically.
%algorithmic
%and how this will give rise to a fully-automatic technique for verifying fair 
%termination for a large class of probabilistic concurrent programs.
}

\OMIT{
Throughout this subsection, we will fix an FMDP $\struct = \FMDP$ with a
probability distribution $\delta$. We introduce a set $\Var$ of
variables that range over natural numbers: $n$, $x_j$ (for each $j \in \WF$),
and $y_c$ (for each $c \in \SF$). As we will see, all variables $x_j$ and
$y_c$ actually range over $\{0,\ldots,n\}$, while $n$ is a variable that
remains constant throughout the program but may be initialised to 
different values. 
}

\OMIT{
The new MDP will be denoted as $\struct' = \langle \transysDom'; \leadsto_1, 
\leadsto_2\rangle$ with a probability distribution $\delta'$. Let $\funcClass$ 
be
the set of all functions (i.e. \emph{valuations}) $f$ mapping each variable in 
$\Var$ 
to a natural number
such that $f(x_j), f(y_c) \leq f(n)$ for each $j \in \WF$ and $c \in \SF$.
We define $\transysDom' = S \times \funcClass$. The transition relation
$\leadsto_2$ is simply a copy of $\to_2$ with the valuation component untouched,
i.e., $(s,f) \leadsto_2 (s',f')$ iff $s \to_2 s'$ and $f = f'$. The definition
of the probability distribution $\delta'$ is straightforward:
\[
    \delta'((s,f),(s',f')) = \left\{ \begin{array}{cc}
                                       0 & \text{\quad if $f \neq f'$,} \\
                                       \delta((s,s')) & \text{\quad otherwise.}
                                     \end{array}
                             \right.
\]
We now define the transition relation $\leadsto_1$. 
%We first provide the
%intuition. 
As in the probabilistic transitions, the transitions of 
$\leadsto_1$ will follow $\to_1$; the difference is that they will also change
the valuation component as follows:
\begin{enumerate}
\item \emph{Changes to $x_j$ (for $j \in \WF$)}. A transition in 
    $\leadsto_1$ --- say, from the configuration $(s,f)$ --- decrements the 
    value of the variable $x_j$ whenever $s$ is in the premise of $j$ but 
    \emph{not} in the consequence of $j$. On the other hand, the value of $x_j$ 
    is reset to the value of $n$ whenever at least one of the two following 
    conditions is met: (1) $s$ is not in the premise of $j$, (2) $s$ is in the 
    consequence of $j$. 
\item \emph{Changes to $y_c$ (for $c \in \SF$)}. A transition in
    $\leadsto_1$ --- say, from the configuration $(s,f)$ --- decrements the
    value of $y_c$ whenever $s$ is in the premise of $c$ but not in the
    consequence of $c$. On the other hand, $y_c$ is reset to the value of
    $n$ whenever $s$ in the consequence of $j$.
\end{enumerate}

Next we will provide a sufficient condition that ensures correctness for the
aforementioned transformation.
We say that the FMDP $\struct = \FMDP$ is \defn{weakly-finite} if,
for each $s \in \transysDom$, the set of reachable configurations from
$s$ (i.e. via $\to_1$ and $\to_2$ transitions) is finite. Observe that
$\transysDom$ itself could be infinite in this case. Weakly finite probabilistic
programs form a large class of probabilistic programs that occur in practice 
\cite{LR16,EGK12}. 

\begin{theorem}[Completeness]
    If $\struct$ is a weakly finite FMDP, then
    \[
        \Prob_{\struct}(S_0 \models F) = \Prob_{\struct'}(S_0' \models F'),
    \] 
    where:
    \begin{itemize}
        \item $S_0' = S_0 \times \funcClass$, and
        \item $F' = (F \times \funcClass) \cup (\transysDom \times
            \funcClass_0)$,
            where $\funcClass_0$ contains all $f \in \funcClass$ such that
            $f(x_j) = 0$ for some $j \in \WF$ or $f(y_c) = 0$ for some $c
            \in \SF$ (i.e. one of the alarms has gone off).
    \end{itemize}
\end{theorem}
}

% \subsection{Finitary fairness in regular model checking}
% \label{sec:fairRegular}

%Proof idea

%*******************************************************************************
\vspace{-2.0mm}
\subsection{Finitary Fairness in Regular Model Checking}
\vspace{-1.0mm}
%*******************************************************************************

We now show how to implement the aforementioned abstract program
transformation in our regular model checking framework. 
%Describe the encoding
Fix a regular representation of an FMDP $\struct = \FMDP$, which includes
two automata over the alphabet $\ialphabet \times \ialphabet$ representing
$\to_1$ and $\to_2$, and an
automaton over the alphabet $\ialphabet \times \Gamma$ representing
the regular specification of the fairness conditions $\SF$ and $\WF$.
[Recall that $\Gamma := \{0,1\} \times \{0,1\} \times \{0,1\}$.]
We describe the construction of $\leadsto_1$ (the 
construction for $\leadsto_2$ is similar). Let
$\Aut = (\ialphabet \times \ialphabet, \controls,\transrel,q_0,\finals)$
be an automaton representing $\to_1$ and
$\Aut^f = (\ialphabet \times \Gamma, \controls^f,\transrel^f,q_0^f,\finals^f)$
be an automaton representing the regular specification of fairness. The construction
of the automaton for $\leadsto_1$ has two stages.

%!!!!!!!!!!!!!!!!!!!!!!!!!!!!!!!!
\enlargethispage{4mm}
%!!!!!!!!!!!!!!!!!!!!!!!!!!!!!!!!

\vspace{-2mm}
\paragraph{Stage 1: compute an intermediate automaton.} The intermediate
automaton $\AutB$ will have the alphabet $\ialphabet' := 
(\ialphabet \times \ialphabet) \cup \Gamma$ and recognize a subset of
%Note that the language recognized by $\AutB$ is in 
$((\ialphabet \times \ialphabet)\Gamma)^*$. 
Intuitively, on input $(a,b)
\in \ialphabet \times \ialphabet$, the automaton $\AutB$ simultaneously
takes a transition over $(a, b)$ in $\Aut$ and any transition $(a,c)$ in $\Aut^f$,
proceeding into an intermediate state where it remembers the value of $c$, which it outputs in the next step.
% Intuitively, on input $(a,b)
% \in \ialphabet \times \ialphabet$, the automaton $\AutB$ simultaneously runs 
% both $\Aut$ and $\Aut^f$. Here, the automaton $\Aut^f$ will nondeterministically
% guess a letter $c \in \Gamma$ and make a transition on the letter $(a,c)$.
This process is repeated until both $\Aut$ and $\Aut^f$ accept.
% only reads the first
% component of $(a,b) \in \ialphabet \times \ialphabet$) while making sure that 
% $\Aut'$ outputs letter in $\Gamma$
More precisely, the automaton is defined as 
    $\AutB := (\ialphabet',\controls^B,\transrel^B,q_0^B,\finals^B)$
where:
\begin{itemize}
    \item $\controls^B = \controls \times \controls^f \times 
            (\Gamma \cup \{?\})$, 
    $q_0^B =  (q_0,q_0^f,?)$, and
    $\finals^B = \finals \times \finals^f \times \{?\}$
    \item $\transrel_B$ has the following transitions:
        \begin{itemize}
            \item $((p_1,q_1^f,?),(a,b),(p_2,q_2^f,c))$ if
                $(p_1,(a,b),p_2) \in \transrel$ and $(q_1^f,(a,c),q_2^f) \in 
                \transrel^f$.
            \item $((p,q^f,c),c,(p,q^f,?))$ for each $c \in \Gamma$.
        \end{itemize}
\end{itemize}

\newcommand{\onecnt}[0]{\bullet}
\newcommand{\zerocnt}[0]{\circ}

\paragraph{Stage 2: regular substitution of letters in $\Gamma$.}
Our encoding of counters is unary using symbols $\onecnt$ and $\zerocnt$, where
$\onecnt$ represents a pebble, and $\zerocnt$ represents empty space.
For instance, a number $n \in \mathbb{N}$ is encoded as $\onecnt^n \zerocnt^*$ (the
number of $\zerocnt$'s is arbitrary, though the length of the whole encoding is
constant due to our use of length-preserving transducers).
We define the following regular languages for manipulating the counters:
\begin{itemize}
    \item (Identity) $\ID := (\onecnt,\onecnt)^+(\zerocnt,\zerocnt)^*$,
    \item (Decrement) $\DEC := (\onecnt,\onecnt)^*(\onecnt,\zerocnt)(\zerocnt,\zerocnt)^*$, and
    \item (Reset) $\RES := (\onecnt,\onecnt)^+(\zerocnt,\onecnt)^*$.
\end{itemize}
Define the regular substitution $\sigma$ mapping letters in $\Gamma$ to
regular languages:
\begin{itemize}
    \item if $(x,y,z)$ is $(i,1,j)$ or $(0,i,0)$ (for
        $i,j \in \{0,1\}$), then $\sigma((x,y,z)) = \RES$.
    \item if $(x,y,z)$ is of the form $(1,0,i)$ (for some $i \in \{0,1\}$),
        then $\sigma((x,y,z)) = \DEC$.
    \item define $\sigma((0,0,1)) = \ID$.
\end{itemize}
We then apply the regular substitution $\sigma$ to the letters $\Gamma$ 
appearing in our intermediate automaton $\AutB$. The resulting automaton
implements the desired automaton for $\leadsto_1$.

\vspace{-1mm}
\paragraph{Finishing off the rest of the construction.}
Computing $S_0'$ and $\finals'$ is easy.
Define $S_0'$ to be the set of all words $a_1 w_1 a_2 w_2\cdots a_m w_m$ such
that $a_1\cdots a_m \in S_0$ and $w_i \in \onecnt^+$ for each $i \in
\{1,\ldots,m\}$.
Similarly, define $\finals'$ to be the set of all words $a_1 w_1 a_2 w_2 \cdots
a_m w_m$ such that
\begin{itemize}
  \item  either $a_1 \cdots a_m \in \finals$ and $w_i \in (\onecnt^+
    \zerocnt^*) \cup \zerocnt^+$ for each $i \in \{1,\ldots,m\}$, or
  \item  $w_i \in \zerocnt^+$ for some $i \in \{1,\ldots,m\}$.
\end{itemize}
Finite automata for these sets could be easily constructed given automata for
$S_0$ and $\finals$.

%Prove correctness

\vspace{-1mm}
\begin{example}{\textbf{[Herman's protocol]}}
We encode process fairness in the following way.
The counters use the unary encoding, their values represented as the lengths
    of sequences of $\onecnt$'s padded on the right by the symbol $\zerocnt$ (crucial to 
    keep the transducers length-preserving). For example, the number 3 is
    represented by any word of the form $\onecnt \onecnt \onecnt~\zerocnt^*$.
    Define $\CTR = \onecnt^* \zerocnt^*$, i.e., the set of all valid counters.
    %used as blanks to make the relevant
%transducers length-preserving).
The set of initial configurations can be expressed using the regular expression
$(\ialphabet \cdot \CTR)^* (\top \cdot \CTR) (\ialphabet \cdot \CTR)^*$, i.e., counters for all processes are initialized to an arbitrary
value.
The set of final configurations is now
$(\bot \cdot \CTR)^* (\top \cdot \CTR) (\bot \cdot \CTR)^* \cup
(\Sigma \cdot \CTR)^* (\Sigma \cdot\ \zerocnt^*) (\Sigma \cdot \CTR)^*$, i.e.,
either there is exactly one token in the system, or (at least) one counter has reached~0.
Scheduler now also performs operations on the counters for processes: for
a~chosen process, the counter is reset, for other processes, the counter is decremented.
This can be expressed as the language
$(I \cdot \DEC)^* \big(((\bot,\overline \bot) + (\top, \overline \top))\cdot
\RES\big)(I \cdot \DEC)^*$.
Actions of the protocol are the same as in the original encoding and the values of counters are left unmodified:
\begin{align*}
(I \cdot \ID)^* \big(((\overline \bot, \bot) + (\overline \top, \top))\cdot \ID\big)(I \cdot \ID)^* &&& \text{\bfseries (idle)} \\
(I \cdot \ID)^* \big((\overline \top, \bot)\cdot
    \ID\big)\big(((\bot, \top) + (\top,\top))\cdot \ID\big)(I \cdot \ID)^* &&& \text{\bfseries (pass token right$_1$)} \\
\big(((\bot, \top) + (\top,\top))\cdot \ID\big)(I \cdot \ID)^* \big((\overline \top, \bot)\cdot
    \ID\big) &&& \text{\bfseries (pass token right$_2$)}
\end{align*}
%
% where $\mathtt{ID} = (1,1)^+(?,?)^*$.
At this point, we can use existing tools for checking termination (without fairness
constraints), e.g.~\cite{LR16}.
Indeed, we can automatically check that the system after reduction terminates
with probability one, thus proving that Herman's protocol fairly terminates
with probability one (under finitary process-fair schedulers).
\end{example}

%\subsection{The subcase of non-probabilistic systems}
\OMIT{
\begin{remark}
\label{rem:compare-regular-fairness}
It is instructive now to review how
fairness is usually embedded in regular model checking --- see e.g.
\cite{rmc-thesis,LTL-MSO} ---
when the systems under consideration are non-probabilistic: given a fairness
constraint $\varphi$, a regular transition system $\struct = \transys$, an 
initial set $I_0 \subseteq \transysDom$ of configurations, and a final set $F
\subseteq \transysDom$ of configurations, one can effectively compute
a new regular transition system $\struct' = \langle \transysDom'; \to\rangle$,
and two new regular sets $I_0', F \subseteq \transysDom'$ such that
$\forall s_0 \in I_0( s_0 \models_{\struct} \Diamond F)$ iff $\forall s_0' \in 
I_0'( s_0' \models_{\struct'} \Diamond F')$. This reduction is quite simple:
in the case when $\varphi$ asserts that each process is
executed infinitely often\footnote{this simple idea extends to general 
fairness constraints},
%express $\varphi$ as a B\"{u}chi automaton $\Aut_\varphi$ and 
$\struct'$ is simply $\struct$ annotated by a single token which is held by a
process~$P$ at each stage and will be passed 
to the next process~$P'$ (e.g. to the right of the current process holding a 
token; if~$P$ is the last letter in the word, then $P'$ is the first letter
in the word) only when $P$ is executed. 
%and $\Aut_\varphi$ computed using the
%standard product automata construction. 
Fair termination simply amounts to proving that there does not exist a path
from some configuration $s_0' \in S_0'$ to some $s_F' \in F'$ such that
there exists a non-empty path $\Path$ from $s_F'$ to itself for which
the token is passed to the right in $\Path$ \emph{at least once}. The last
condition ensures fairness, i.e., when $\Path$ is taken, all processes have 
been executed at least once.
Unfortunately, this simple reduction fails when we consider probabilistic 
concurrent systems. 
\end{remark}
}

%\input{realizability}
%\input{probFair}
%\input{weakFair}
%\input{individual}
%%%%%%%%%%%%%%%%%%%%%%%%%%%%%%%%%%%%%%%%%%%%%%%%%%%%%%%%%%%%%%%%%%%%%%%%%%%%%%%%
\vspace{-3.0mm}
\section{Implementation and Experiments} \label{sec:experiments}
\vspace{-1.0mm}
%%%%%%%%%%%%%%%%%%%%%%%%%%%%%%%%%%%%%%%%%%%%%%%%%%%%%%%%%%%%%%%%%%%%%%%%%%%%%%%%

%%% Local Variables:
%%% mode: latex
%%% TeX-master: "main"
%%% End:

The approach presented in this paper has been implemented in the tool \toolname.%
\footnote{%
\url{https://github.com/uuverifiers/autosat/tree/master/Fairness}
}
For evaluation, we extracted models of a~number of probabilistic parameterized
systems.
The tool receives a~system with fairness conditions and transforms it into a
system without fairness conditions, where fairness of the original system is
encoded using counters.
For solving liveness in the output transformed system, we use
\slrp{}~\cite{LR16} (in the \emph{incremental liveness proofs} setting)
as the underlying liveness checker for parameterized systems.

\begin{table}
\begin{center}
\vspace{-5.5mm}
\label{tab:results}
\caption{Times of analyses of probabilistic paremeterised systems.
The timeout was set to 10~hours (timeout is denoted as T/O).}
\footnotesize
\hspace{-3mm}
\begin{minipage}{5.4cm}
  \begin{tabular}{|l||r|}
  \hline
  Case study & Time \\
  \hline\hline
  Herman's protocol (merge, line)   & 3.64\,s \\
  Herman's protocol (annih., line)  & 4.33\,s \\
  Herman's protocol (merge, ring)   & 4.31\,s \\
  Herman's protocol (annih., ring)  & 4.61\,s \\
  \hline
  Moran process (2 types, line)     & 2\,m 48\,s \\
  Moran process (3 types, line)     & 56\,m 14\,s \\
  \hline
  Cell cycle switch (1 types, line) & 43.94\,s \\
  Cell cycle switch (2 types, line) & 9\,h 46\,m \\
  \hline
  Clustering (2 types, line)        & 10\,m 30\,s \\
  Clustering (3 types, line)        & T/O \\
  \hline
  Coin game ($k = 3$, clique)       & 1\,m 0\,s \\
  \hline
  \end{tabular}
\end{minipage}
\end{center}
\end{table}

Table~\ref{tab:results} shows the results of our experiments.
The times given are the wall clock times for the individual
benchmarks on a~PC with 4 Quad-Core AMD Opteron 8389
processors with Java heap memory limited to 64\,GiB.
The time included translation of the system into a~system without fairness (always
less than 1s)
and the runtime of \slrp{}.

%------------------------------------------------------------------------------
\paragraph{Herman's protocol.}
We consider two versions of \emph{Herman's protocol} 
(described in more detail in Section~\ref{sec:herman})
that differ in the way how they
handle the situation when a~process already holding a~token receives another
token---either the two tokens are merged into one, or they are both
annihilated---and two topologies: a~line and a~ring.
In our version of Herman's protocol, all processes are always enabled, and in
case they do not hold a~token, they can only stay in their state\footnote{%
Note that even for the line topology, this model requires fairness to verify
that a~configuration with a~single token is reachable with probability~1.
The model used in~\cite{LR16} did not require the fairness assumption since it
modelled processes without tokens as disabled.}.
Fig.~\ref{fig:herman-results} shows an example of a~synthesized solution for
the model of Herman's protocol on a~ring (the \emph{annihilating} variant).

%-------------------------------------------------------------------------------
\vspace{-0.0mm}
\paragraph{Moran process.}\label{sec:label}
\vspace{-0.0mm}
%-------------------------------------------------------------------------------
% Moran process is a~model of genetic drift~\cite{Moran58} with
% individuals of $N \geq 2$ types.
% When an individual is chosen by the scheduler, it can either idle or infect
% a~neighbor.
Our second example uses variants of \emph{Moran process},
a model of genetic drift, on a~linear array~\cite{Moran58}, 
where the order in which individuals move is determined by the scheduler.
Each node of the array is an allele, and there are two types of 
alleles: $A$ and $B$ (we can generalise this to any number $N \geq 2$ of 
alleles). 
% In our experiments, we fix two and three alleles.
At each round, the scheduler picks an allele. 
The allele will randomly either copy its type to itself
or ``infect'' one of the neighbours (copy its type to a random 
neighbour). 
We check the fair termination property that the system eventually reaches
a~``drift'' state, where all alleles are of the same type, 
under every process-fair scheduler, with probability~one. 

% [Hint: see Skype conversation, e.g., from
% $A, A, B, B$, once you are in $A, A, A, B$ (or in $A, B, B, B$), you simply
% choose the lonely allele, which must restore the configuration to 
% $A, A, B, B$.] In fact, this is true even if strong/weak fairness is imposed.
%
Note that the termination property is not true if we do not impose fairness:
for example, when in
$A A B B$, an unfair scheduler can simply choose the first $A$ all the time.
Additionally, small variants of the model may not satisfy the property.
Consider the variation where the chosen allele must infect one of its neighbors.
For a linear array of size at least~4, a~fair
scheduler can play in such a way that guarantees that
the system never terminates in a~drift state.

%-------------------------------------------------------------------------------
\vspace{-0.0mm}
\paragraph{Cell cycle switch.}\label{sec:label}
\vspace{-0.0mm}
%-------------------------------------------------------------------------------

The model of \emph{cell cycle switch} is a~simplification of the model
of~\cite{cell-cycle-switch}
to reach a~common decision of all members of a~population between two possible
outcomes that approximately matches the initial relative majority.
We assume cells are of three types according to their decision: $X$, $Y$, and
\emph{undecided} (in the case of two types, we consider $X$ and
\emph{undecided}).
A~decided cell ($X$ or $Y$) can change the type of an~undecided
neighbour to its own.
In addition, a~decided cell can change a~neighbour with an opposite decision to
\emph{undecided}.
We verify that from any initial configuration,
with probability~1, the system stabilizes into a~configuration where all cells
share a~common decision.

\begin{figure}[t]%[17]{r}{4.5cm}
% \vspace*{-13.5mm}
% \hspace*{-6mm}
\begin{minipage}[b]{6cm}
\includegraphics[keepaspectratio,width=\textwidth]{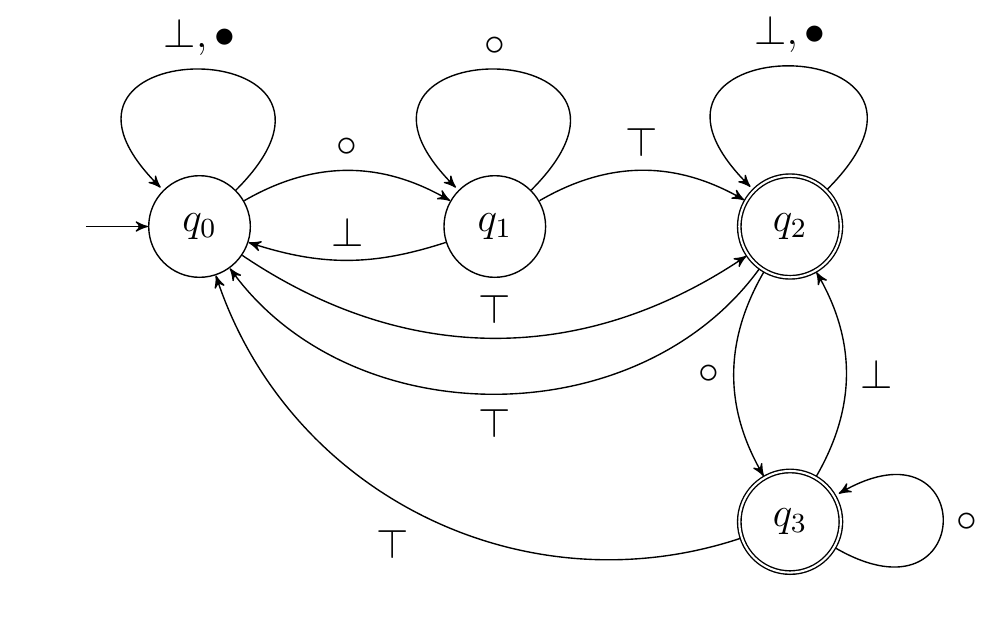}

  \begin{center}
  \vspace{-6mm}
  a) $\mathit{Inv}$
  \end{center}
  % \vspace{-4mm}
\end{minipage}
\begin{minipage}[b]{6cm}
  \includegraphics[keepaspectratio,width=\textwidth]{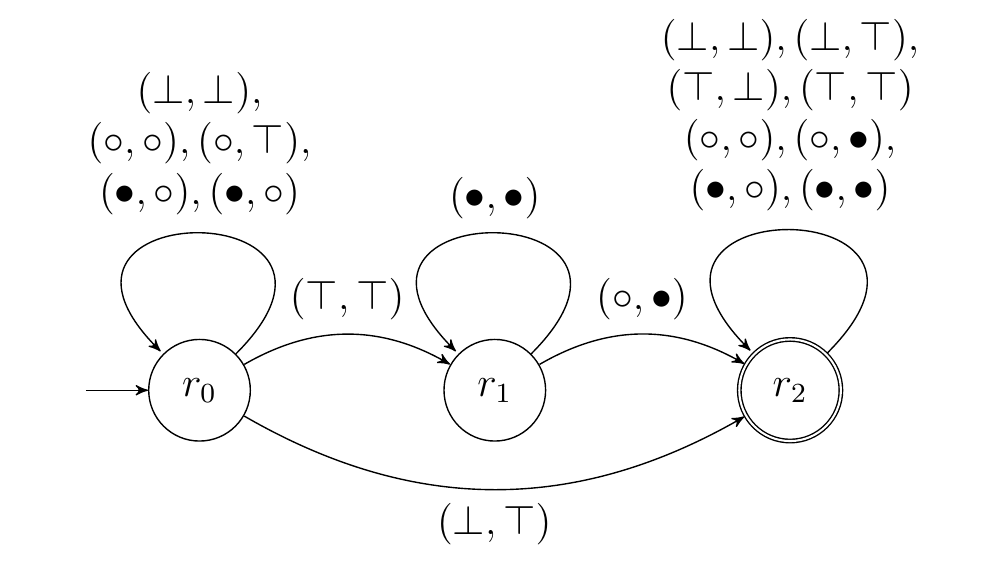}

  \begin{center}
  % \vspace{-4mm}
  b) $\prec$
  \end{center}
\end{minipage}
% \vspace{-6mm}
\caption{Synthesized advice bits for Herman's protocol on a~ring}
\label{fig:herman-results}
\end{figure}

%-------------------------------------------------------------------------------
\vspace{-0.0mm}
\paragraph{Clustering.}\label{sec:label}
\vspace{-0.0mm}
%-------------------------------------------------------------------------------
The \emph{clustering} example considers a~population model of alleles of 2
(resp.~3) types, say $\{A,B\}$ (resp.\ $\{A, B, C\}$), on a~line.
The alleles can change position with their neighbours of a~different type,
e.g.~$AB \to BA$.
We verify that from any configuration, the system will reach
a~state where the alleles form 2 (resp.\ 3) clusters of the same type.

%-------------------------------------------------------------------------------
\vspace{-0.0mm}
\paragraph{Coin game.}\label{sec:label}
\vspace{-0.0mm}
%-------------------------------------------------------------------------------
In the \emph{coin game} use case, we consider a~population protocol where every
agent has one of two types of coins: \emph{gold} or \emph{silver}.
In each step, an agent chosen by the scheduler will either keep its currency,
or switch to the currency held by the majority from $k$~randomly selected
neighbours.
We verify that we eventually get to a~configuration where all agents have the
same type of coins.

\bigskip

The experiments show that our encoding of fairness into systems is viable and
can be used for verification of parameterized systems with fairness by their
reduction to systems without fairness.
On the other hand, when the size of the regular proof is large, we observe that
the problem for the underlying solver gets significantly more difficult (as can
be seen on the example of \emph{clustering} for three types of alleles).
We conjecture that the performance can be improved significantly by making the
solver take into account the (not arbitrary) structure of the problem, which we
leave for future work.

% Wolf sheep~\cite{Wil99}

% We use the \texttt{SLRD} in the \emph{incremental liveness proofs} setting.

\smallskip\noindent\emph{Future work.} We leave the reader with several
research challenges. A natural question is how to deal with non-finitary 
fairness for parameterized probabilistic concurrent systems in general and 
in the framework of regular model checking. Secondly, since there are
numerous examples of population models over more complex topologies (e.g.
grids), how do you deal with termination and fair termination over such models
in the parameterized setting? 

\smallskip\noindent\emph{Acknowledgement.} We thank anonymous reviewers
and Dave Parker for their helpful feedback. This work was supported by the Czech
Science Foundation (project 16-24707Y), the BUT FIT project
FIT-S-17-4014, the IT4IXS: IT4Innovations Excellence in Science project
(LQ1602), Yale-NUS Starting Grant, the European Research Council under 
% the European Union’s Seventh Framework Programme (FP/2007-2013)/
ERC Grant Agreement no. 610150, and Swedish Research Council (2014-5484).

%\newpage

% \bibliographystyle{plainurl}% the recommended bibstyle
% \bibliographystyle{abbrvnat}% the recommended bibstyle for POPL
% \bibliographystyle{plain}% the recommended bibstyle for POPL
\bibliographystyle{splncs}% the recommended bibstyle for POPL

\bibliography{references}

\clearpage

% \appendix
% \begin{center}
%     \LARGE \bfseries APPENDIX
% \end{center}
%
% \input{appendix.tex}

\end{document}